\newtheorem{theorem}{Theorem}[section]
\newtheorem{lemma}{Lemma}[section]
\newtheorem{definition}{Definition}[section]
\newtheorem{proposition}{Proposition}[section]
\newtheorem{remark}{Remark}[section]
\newcommand{\T}{\mathsf{T}}
\renewcommand{\H}{\mathsf{H}}
\newcommand{\cN}{\mathcal{N}}
\newcommand{\dd}{\mathrm{d}}
\newcommand{\range}{\mathrm{im}}
\newcommand{\R}{\mathbb{R}}
\newcommand{\N}{\mathbb{N}}
\newcommand{\C}{\mathbb{C}}
\newcommand{\ri}{\mathrm{i}}
\newcommand{\hinf}{{\mathcal{H}_\infty}}
\newcommand{\rhinf}{\mathcal{RH}_\infty}
\newcommand{\htwo}{{\mathcal{H}_2}}
\DeclareMathOperator{\tr}{tr}
\definecolor{colorone}{rgb}{0.0,0.0,1.0}
\definecolor{colortwo}{rgb}{0.4889,0.2356,0.1781}
\definecolor{colorthree}{rgb}{0.3644,0.2441,0.4243}
\definecolor{ourcolor}{rgb}{0.2422,0.9433,0.3044}
\newcommand{\nx}{n_x}
\newcommand{\enu}{n_u}
\newcommand{\ny}{n_y}
\newcommand{\pHsys}{{\Sigma_{\mathrm{pH}}}}
\newcommand{\mechSys}{{\Sigma_{\mathrm{sso}}}}
\newcommand{\pHtf}{{G_{\mathrm{pH}}}}
\newcommand{\mechtf}{{G_{\mathrm{sso}}}}
\newcommand{\pHsysr}{{\Sigma_{\mathrm{pH},r}}}
\newcommand{\mechSysr}{{\Sigma_{\mathrm{sso},r}}}
\newcommand{\pJ}{J}
\newcommand{\pR}{R}
\newcommand{\pQ}{Q}
\newcommand{\pB}{B}
\newcommand{\pM}{M}
\newcommand{\pD}{D}
\newcommand{\pK}{K}
\newcommand{\cP}{\mathcal{P}}
\newcommand{\cQ}{\mathcal{Q}}
\newcommand{\cR}{\mathcal{R}}
\newcommand{\cL}{\mathcal{L}}
\newcommand{\loss}{L}
\newcommand{\LSQ}{\mathrm{lsq}}
\renewcommand{\epsilon}{\varepsilon}
\DeclareMathOperator{\vtf}{vtf}
\DeclareMathOperator{\ftv}{ftv}
\DeclareMathOperator{\vtu}{vtu}
\DeclareMathOperator{\utv}{utv}
\DeclareMathOperator{\vtsu}{vtsu}
\DeclareMathOperator{\sutv}{sutv}
\DeclareMathOperator{\Real}{Re}
\renewcommand{\hat}{\widehat}
\newcommand{\dta}{\mathrm{d}\theta_J}
\newcommand{\dtb}{\mathrm{d}\theta_R}
\newcommand{\dtc}{\mathrm{d}\theta_Q}
\newcommand{\dtd}{\mathrm{d}\theta_B}
\newcommand{\dtM}{\mathrm{d}\theta_M}
\newcommand{\dtD}{\mathrm{d}\theta_D}
\newcommand{\dtK}{\mathrm{d}\theta_K}
\newcommand{\dtB}{\mathrm{d}\theta_B}
\newcommand{\inp}{\pB}
\newcommand{\dyn}{\mathcal{F}_0}
\newcommand{\dynmech}{\mathcal{F}_0}
\newcommand{\vv}{\widehat v}
\newcommand{\uu}{\widehat u}
\newcommand{\dJ}{Y_1}
\newcommand{\dQ}{Y_2}
\newcommand{\cI}{\mathcal{I}}
\newcommand{\errorTFwidth}{0.5\textwidth}
\newcommand{\errorTFheight}{0.25\textheight}
\title{SOBMOR: Structured Optimization-Based Model Order Reduction\thanks{This work is supported by the German Research Foundation (DFG) within the project VO2243/2-1: ``Interpolationsbasierte numerische Algorithmen in der robusten Regelung" and the DFG Cluster of Excellence MATH+ within the project AA4-5: ``Energy-based modeling, simulation, and optimization of power systems under uncertainty". This research has mainly been carried out while the second author was affiliated with Universit\"at Hamburg and Technische Universit\"at Berlin. Their support is gratefully acknowledged.}}
\author{Paul Schwerdtner\thanks{\emph{Corresponding author}, Technische Universit\"at Berlin, Institut f\"ur Mathematik, Stra{\ss}e des 17. Juni 136, 10623 Berlin, Germany. E-Mail: schwerdt@math.tu-berlin.de.} and Matthias Voigt\thanks{UniDistance Suisse, \"Uberlandstrasse 12, 3900 Brig, Switzerland. E-Mail: matthias.voigt@fernuni.ch.}}
\begin{document}

\maketitle
\begin{abstract}
  Model order reduction (MOR) methods that are designed to preserve structural features of a given full order model (FOM) often suffer from a lower accuracy when compared to their non-structure-preserving counterparts. In this paper, we present a framework for structure-preserving MOR, which allows to compute structured reduced order models (ROMs) with a much higher accuracy. The framework is based on parameter optimization, i.\,e., the elements of the system matrices of the ROM are iteratively varied to minimize an objective functional that measures the difference between the FOM and the ROM. The structural constraints can be encoded in the parametrization of the ROM. The method only depends on frequency response data and can thus be applied to a wide range of dynamical systems.

  We illustrate the effectiveness of our method on a port-Hamiltonian and on a symmetric second-order system in a comparison with other structure-preserving MOR algorithms.


\end{abstract}

\section{Introduction}

The increased demand for analysis and control of interconnected systems has promoted the investigation of structured models that encode physical properties of the underlying system such as stability or passivity. This is useful as these properties are hard to check or to enforce for system interconnections but are often automatically given in structured models. When (sub)-systems describe complex physical phenomena, high-fidelity modeling often leads to models with a large state-space dimension which renders simulations or controller synthesis computationally challenging or even impossible.

This problem is addressed by model order reduction (MOR). MOR allows to compute low order surrogate models that approximate the original model's input/output map, i.\,e., the same excitation of the full order model (FOM) and the reduced order model (ROM) leads to a similar response and the ROM can be used in simulations or for controller design instead of the FOM. In the setting of interconnected systems, it is essential to not only mimic the given input/output map of some component but also preserve its physical structure to be able to still benefit from the automatic properties of structured models.

For that, structure-preserving MOR methods have been introduced. Throughout this paper we consider the MOR of port-Hamiltonian (pH) systems (see \cite{VanJ14} for an overview) of the form
\begin{align*}
  \pHsys:\;
  \begin{cases}
    \dot x(t)=(J-R)Q x(t) + B u(t),\\
    y(t) = B^\T Q x(t),
  \end{cases}
\end{align*}
where $J,\, R,\, Q \in \R^{\nx \times \nx}$ and $B \in \R^{\nx \times \enu}$ 
with the structural constraints $J = -J^\T$, $R \succeq 0$, and $Q \succeq 0$. Here, for a matrix $X \in \R^{n\times n}$, $X \succ 0$ ($X \succeq 0$) indicates that $X$ is symmetric positive (semi)-definite. Note that often $Q \succ 0$ is assumed, e.\,g., in  \cite{BeaMX15} but for simplicity of presentation we assume that $Q \succeq 0$. The \emph{Hamiltonian} $H:\R^n \to \overline{\R^+} :=\{ a\in\R\; | \;a\ge0 \}$ of $\pHsys$ is given by
\begin{align*}
  H(x(t))=\frac{1}{2}x(t)^{\T}Qx(t).
\end{align*}
It describes the total internal energy of the system in the state $x(t)$.


The other system class we consider consists of symmetric second-order systems (SSO) with co-located force actuators and position sensors which usually arise in the modeling of mechanical systems. They are of the form
\begin{align*}
  \mechSys:\;
  \begin{cases}
    M \ddot x(t) + D\dot x(t)+ K x(t) = B u(t),\\
    \phantom{M \ddot x(t) + D\dot x(t)+ K}y(t) = B^{\T} x(t),
  \end{cases}
\end{align*}
where $M,\, D,\, K \in \R^{\nx \times \nx}$ and $B\in \R^{\nx \times \enu}$ and where $M\succeq 0$, $D \succeq 0$, and $K\succeq 0$. Such systems are discussed, e.\,g., in \cite{Reis2008,PetL10}. In a similar fashion, we could consider systems with co-located velocity instead of position outputs. The latter can be shown to be pH by an appropriate transformation to first-order form.


Our goal is to compute ROMs which share exactly the same structural features as the FOMs. For a pH system this means that we want to obtain a ROM of the form
\begin{align*}
  \pHsysr:\;
  \begin{cases}
    \dot x_r(t)=(J_r-R_r)Q_r x_r(t) + B_r u(t),\\
         y_r(t) = B_r^\T Q_r x_r(t),
  \end{cases}
\end{align*}
where $J_r,\, R_r,\, Q_r \in \R^{r \times r}$, $B_r \in \R^{r \times \enu}$ with  $J_r = -J_r^\T$, $R_r \succeq 0$, and $Q_r \succeq 0$ and with $r \ll \nx$. Analogously, a ROM for the system $\mechSys$ would be of the form
\begin{align} \label{eq:mechrom}
  \mechSysr:\;
  \begin{cases}
    M_r \ddot x_r(t) + D_r\dot x_r(t)+ K_r x_r(t) = B_r u(t),\\
    \phantom{M_r \ddot x_r(t) + D_r\dot x_r(t)+ K_r}y_r(t) = B_r^{\T} x_r(t),
  \end{cases}
\end{align}
where $M_r,\, D_r,\, K_r \in \R^{r \times r}$ and $B_r\in \R^{r \times \enu}$ with $M_r\succeq0$, $D_r \succeq 0$, $K_r \succeq 0$, and $r \ll \nx$.

In the literature there exist several different approaches to MOR and multiple different algorithms have been developed. Popular classes of reduction algorithms for linear time-invariant systems are the following: (i) eigenvalue-based approaches such as \emph{modal truncation} in which dominant eigenvalues of an underlying eigenvalue problem are extracted; (ii) balancing-based methods such as \emph{balanced truncation (BT)} in which input and output energies are considered to identify states which contribute only little to the input/output behavior of the system; and (iii) rational approximation methods such as \emph{moment matching} or the \emph{iterative rational Krylov algorithm (IRKA)} in which approximation techniques for the system's transfer function are employed. We refer to the textbooks \cite{Ant05} and \cite{AntBG20} for an overview.

Structure-preserving variants of the above mentioned methods have been 
considered intensively in the literature. For example, the works 
\cite{Polyuga2008,Polyuga2010,WolLEK10,Gugercin2012,HauMM19} treat the case of 
pH systems, while 
\cite{MeyS96,Sal05,BaiS05,ChaLVV06,RomM06,SalL06,Reis2008,HarVS10,BenKS13,
PanWL13,SaaSW19,DorRV20} consider the case of second-order systems. Furthermore, recent advances in the backward error analysis of the structured eigenvalue problems associated with the system classes under consideration \cite{Dop18,Meh20} are of potential use in the derivation of eigenvalue-based reduction methods for pH and SSO systems.  
Note that pH systems are also \emph{passive}. In rough terms, this means that such systems cannot internally produce energy. The preservation of this property is also addressed in many articles such as \cite{Obe91,GuiO13,Sor05,IonRA08}. 

All the methods mentioned above have in common that they are projection-based, i.\,e., the involved matrices are reduced by an appropriate 
multiplication with (tall and skinny) projection matrices. 
In contrast to that we develop a new framework which is based on making an 
ansatz for the structure of the ROM and then optimizing the entries of the 
resulting ROM parametrization. We call this approach {SOBMOR}, which stands for \textbf{S}tructured \textbf{O}ptimization-\textbf{B}ased \textbf{M}oder \textbf{O}rder \textbf{R}eduction.
For the optimization we introduce a least 
squares type objective functional which measures the differences between the 
transfer functions of the FOM and the ROM.

\begin{figure}[t]
  \centering
  \begin{tabular}{c}
    \begin{tikzpicture}[/tikz/background rectangle/.style={fill={rgb,1:red,1.0;green,1.0;blue,1.0}, draw opacity={1.0}}, show background rectangle]
\begin{axis}[title={}, title style={at={{(0.5,1)}}, font={{\fontsize{14 pt}{18.2 pt}\selectfont}}, color={rgb,1:red,0.0;green,0.0;blue,0.0}, draw opacity={1.0}, rotate={0.0}}, legend style={color={rgb,1:red,0.0;green,0.0;blue,0.0}, draw opacity={1.0}, line width={1}, solid, fill={rgb,1:red,1.0;green,1.0;blue,1.0}, fill opacity={1.0}, text opacity={1.0}, font={{\fontsize{8 pt}{10.4 pt}\selectfont}}, at={(1.02, 1)}, anchor={north west}}, axis background/.style={fill={rgb,1:red,1.0;green,1.0;blue,1.0}, opacity={1.0}}, anchor={north west}, xshift={1.0mm}, yshift={-1.0mm}, width={0.8\textwidth}, height={0.25\textheight}, scaled x ticks={false}, xlabel={$r$}, x tick style={color={rgb,1:red,0.0;green,0.0;blue,0.0}, opacity={1.0}}, x tick label style={color={rgb,1:red,0.0;green,0.0;blue,0.0}, opacity={1.0}, rotate={0}}, xlabel style={, font={{\fontsize{11 pt}{14.3 pt}\selectfont}}, color={rgb,1:red,0.0;green,0.0;blue,0.0}, draw opacity={1.0}, rotate={0.0}}, xmajorgrids={true}, xmin={3.52}, xmax={20.48}, xtick={{4.0,8.0,12.0,16.0,20.0}}, xticklabels={{$4$,$8$,$12$,$16$,$20$}}, xtick align={inside}, xticklabel style={font={{\fontsize{8 pt}{10.4 pt}\selectfont}}, color={rgb,1:red,0.0;green,0.0;blue,0.0}, draw opacity={1.0}, rotate={0.0}}, x grid style={color={rgb,1:red,0.0;green,0.0;blue,0.0}, draw opacity={0.1}, line width={0.5}, solid}, axis x line*={left}, x axis line style={color={rgb,1:red,0.0;green,0.0;blue,0.0}, draw opacity={1.0}, line width={1}, solid}, scaled y ticks={false}, ylabel={$\hinf$ error}, y tick style={color={rgb,1:red,0.0;green,0.0;blue,0.0}, opacity={1.0}}, y tick label style={color={rgb,1:red,0.0;green,0.0;blue,0.0}, opacity={1.0}, rotate={0}}, ylabel style={, font={{\fontsize{11 pt}{14.3 pt}\selectfont}}, color={rgb,1:red,0.0;green,0.0;blue,0.0}, draw opacity={1.0}, rotate={0.0}}, ymode={log}, log basis y={10}, ymajorgrids={true}, ymin={4.169570177021927e-6}, ymax={0.7106243024441214}, ytick={{1.0e-5,0.0001,0.001,0.01,0.1}}, yticklabels={{$10^{-5}$,$10^{-4}$,$10^{-3}$,$10^{-2}$,$10^{-1}$}}, ytick align={inside}, yticklabel style={font={{\fontsize{8 pt}{10.4 pt}\selectfont}}, color={rgb,1:red,0.0;green,0.0;blue,0.0}, draw opacity={1.0}, rotate={0.0}}, y grid style={color={rgb,1:red,0.0;green,0.0;blue,0.0}, draw opacity={0.1}, line width={0.5}, solid}, axis y line*={left}, y axis line style={color={rgb,1:red,0.0;green,0.0;blue,0.0}, draw opacity={1.0}, line width={1}, solid}, colorbar style={title={}}, point meta max={nan}, point meta min={nan}]
    \addplot[color=colorone, name path={1c97c6ec-a4e8-4e04-b96d-a14d666e4a38}, draw opacity={1.0}, line width={1}, solid, mark={*}, mark size={3.0 pt}, mark options={color={rgb,1:red,0.0;green,0.0;blue,0.0}, draw opacity={1.0}, fill=colorone, fill opacity={1.0}, line width={0.75}, rotate={0}, solid}]
        coordinates {
            (4,0.15753775527276784)
            (6,0.04027322931391762)
            (8,0.009039266088716613)
            (10,0.0013965612409578953)
            (12,0.00036038074920490167)
            (14,9.760860161700115e-5)
            (16,3.718258464869135e-5)
            (18,1.4904130437195738e-5)
            (20,9.515525824944514e-6)
        }
        ;
    \addlegendentry {BT}
    \addplot[color=blue, name path={fdc86282-b401-4979-abae-4468f1aaf760}, draw opacity={1.0}, line width={1}, solid, mark={*}, mark size={3.0 pt}, mark options={color={rgb,1:red,0.0;green,0.0;blue,0.0}, draw opacity={1.0}, fill=blue, fill opacity={1.0}, line width={0.75}, rotate={0}, solid}, dashed]
        coordinates {
            (4,0.4813471942945101)
            (6,0.5053326346940811)
            (8,0.33265402950983736)
            (10,0.2792712148058059)
            (12,0.2535149168194241)
            (14,0.1870396561695439)
            (16,0.14977502520076516)
            (18,0.12700565585039805)
            (20,0.086946101277349)
        }
        ;
    \addlegendentry {pH-BT}
    \addplot[color=purple, name path={8bcb0643-48fc-42f0-bc69-1dd8a3c976d3}, draw opacity={1.0}, line width={1}, solid, mark={diamond*}, mark size={3.0 pt}, mark options={color={rgb,1:red,0.0;green,0.0;blue,0.0}, draw opacity={1.0}, fill=purple, fill opacity={1.0}, line width={0.75}, rotate={0}, solid}]
        coordinates {
            (4,0.17027228081798673)
            (6,0.12484574367356895)
            (8,0.04303408150106201)
            (10,0.0018477730009792128)
            (12,0.0025173078149703)
            (14,0.00011531401245745243)
            (16,0.0003000306047994916)
            (18,2.2769436542425816e-5)
            (20,1.0259198015791216e-5)
        }
        ;
    \addlegendentry {IRKA}
    \addplot[color=purple, name path={5e7e1fa2-9a18-4027-b9a8-5d99613a1c14}, draw opacity={1.0}, line width={1}, solid, mark={diamond*}, mark size={3.0 pt}, mark options={color=black, draw opacity={1.0}, fill=purple, fill opacity={1.0}, line width={0.75}, rotate={0}, solid}, dashed]
        coordinates {
            (4,0.29644543479395774)
            (6,0.2520497943209842)
            (8,0.16276581594136239)
            (10,0.1287240783673755)
            (12,0.09709961195728065)
            (14,0.07263289970224365)
            (16,0.05774756684325709)
            (18,0.04021116428444131)
            (20,0.027135580825011017)
        }
        ;
    \addlegendentry {pH-IRKA}
    \addplot[color=ourcolor, name path={ee35c385-852e-4ea4-bc93-4a19681d6410}, draw opacity={1.0}, line width={1}, solid, mark={triangle*}, mark size={4.0 pt}, mark options={color={rgb,1:red,0.0;green,0.0;blue,0.0}, draw opacity={1.0}, fill=ourcolor, fill opacity={1.0}, line width={0.75}, rotate={0}, solid}]
        coordinates {
          (4, 7.716396e-02)
          (6, 3.342240e-02)
          (08,6.411867e-03)
          (10,7.429191e-04)
          (12,1.685170e-04)
          (14,5.347107e-05)
          (16,1.659474e-05)
          (18,7.503271e-06)
          (20,4.931897e-06)
        }
        ;
    \addlegendentry {SOBMOR}
\end{axis}
\end{tikzpicture} \\ (a) pH MOR results \\ \begin{tikzpicture}[/tikz/background rectangle/.style={fill={rgb,1:red,1.0;green,1.0;blue,1.0}, draw opacity={1.0}}, show background rectangle]
\begin{axis}[title={}, title style={at={{(0.5,1)}}, font={{\fontsize{14 pt}{18.2 pt}\selectfont}}, color={rgb,1:red,0.0;green,0.0;blue,0.0}, draw opacity={1.0}, rotate={0.0}}, legend style={color={rgb,1:red,0.0;green,0.0;blue,0.0}, draw opacity={1.0}, line width={1}, solid, fill={rgb,1:red,1.0;green,1.0;blue,1.0}, fill opacity={1.0}, text opacity={1.0}, font={{\fontsize{8 pt}{10.4 pt}\selectfont}}, at={(1.02, 1)}, anchor={north west}}, axis background/.style={fill={rgb,1:red,1.0;green,1.0;blue,1.0}, opacity={1.0}}, anchor={north west}, xshift={1.0mm}, yshift={-1.0mm}, width={0.8\textwidth}, height={0.25\textheight}, scaled x ticks={false}, xlabel={$r$}, x tick style={color={rgb,1:red,0.0;green,0.0;blue,0.0}, opacity={1.0}}, x tick label style={color={rgb,1:red,0.0;green,0.0;blue,0.0}, opacity={1.0}, rotate={0}}, xlabel style={, font={{\fontsize{11 pt}{14.3 pt}\selectfont}}, color={rgb,1:red,0.0;green,0.0;blue,0.0}, draw opacity={1.0}, rotate={0.0}}, xmajorgrids={true}, xmin={4.52}, xmax={21.48}, xtick={{5.0,9.0,13.0,17.0,21.0}}, xticklabels={{$5$,$9$,$13$,$17$,$21$}}, xtick align={inside}, xticklabel style={font={{\fontsize{8 pt}{10.4 pt}\selectfont}}, color={rgb,1:red,0.0;green,0.0;blue,0.0}, draw opacity={1.0}, rotate={0.0}}, x grid style={color={rgb,1:red,0.0;green,0.0;blue,0.0}, draw opacity={0.1}, line width={0.5}, solid}, axis x line*={left}, x axis line style={color={rgb,1:red,0.0;green,0.0;blue,0.0}, draw opacity={1.0}, line width={1}, solid}, scaled y ticks={false}, ylabel={$\hinf$ error}, y tick style={color={rgb,1:red,0.0;green,0.0;blue,0.0}, opacity={1.0}}, y tick label style={color={rgb,1:red,0.0;green,0.0;blue,0.0}, opacity={1.0}, rotate={0}}, ylabel style={, font={{\fontsize{11 pt}{14.3 pt}\selectfont}}, color={rgb,1:red,0.0;green,0.0;blue,0.0}, draw opacity={1.0}, rotate={0.0}}, ymode={log}, log basis y={10}, ymajorgrids={true}, ymin={1e-9}, ymax={0.11435994055361993}, ytick={{1.0e-8,1.0e-6,0.0001,0.01}}, yticklabels={{$10^{-8}$,$10^{-6}$,$10^{-4}$,$10^{-2}$}}, ytick align={inside}, yticklabel style={font={{\fontsize{8 pt}{10.4 pt}\selectfont}}, color={rgb,1:red,0.0;green,0.0;blue,0.0}, draw opacity={1.0}, rotate={0.0}}, y grid style={color={rgb,1:red,0.0;green,0.0;blue,0.0}, draw opacity={0.1}, line width={0.5}, solid}, axis y line*={left}, y axis line style={color={rgb,1:red,0.0;green,0.0;blue,0.0}, draw opacity={1.0}, line width={1}, solid}, colorbar style={title={}}, point meta max={nan}, point meta min={nan}]
  \addplot[color=blue, name path={6e2877b2-05c8-46fb-955e-6f02e81a55d9}, draw opacity={1.0}, line width={1}, solid, mark={*}, mark size={3.0 pt}, mark options={color={rgb,1:red,0.0;green,0.0;blue,0.0}, draw opacity={1.0}, fill=colorone, fill opacity={1.0}, line width={0.75}, rotate={0}, solid}]
  coordinates {
    (5,  0.04124344785120114)
    (7,  0.027751421590272808)
    (9,  0.0031892155249361454)
    (11, 0.00034308205507223104)
    (13, 3.389368322841432e-5)
    (15, 3.5179019130119426e-6)
    (17, 1.5581756042523324e-7)
    (19, 3.526359560664446e-8)
    (21, 4.0871085381989925e-9)
  }
  ;
    \addlegendentry {BT}
    \addplot[color=blue, name path={6bbdc228-316c-411a-b429-b7b55f11bff0}, draw opacity={1.0}, line width={1}, solid, mark={*}, mark size={3.0 pt}, mark options={color={rgb,1:red,0.0;green,0.0;blue,0.0}, draw opacity={1.0}, fill=blue, fill opacity={1.0}, line width={0.75}, rotate={0}, solid}, dashed]
        coordinates {
            (5,0.04282119599987606)
            (7,0.019696244977096777)
            (9,0.005847487980626263)
            (11,0.0014658958365942642)
            (13,0.00033165135178661294)
            (15,7.05002174038418e-5)
            (17,1.4258904952455394e-5)
            (19,2.992117374064235e-6)
            (21,6.098886235369962e-7)
        }
        ;
    \addlegendentry {SO-BT}
    \addplot[color=ourcolor, name path={33da23d1-c61f-40e9-bfb2-5a7a3ef72ca4}, draw opacity={1.0}, line width={1}, solid, mark={triangle*}, mark size={4.0 pt}, mark options={color={rgb,1:red,0.0;green,0.0;blue,0.0}, draw opacity={1.0}, fill=ourcolor, fill opacity={1.0}, line width={0.75}, rotate={0}, solid}]
        coordinates {
            (05,4.174850e-03)
            (07,3.298406e-04)
            (09,2.184062e-05)
            (11,2.779742e-07)
            (13,2.757395e-08)
            (15,1.230419e-08)
            (17,1.113622e-08)
            (19,1.115440e-08)
            (21,1.234850e-08)
        }
        ;
    \addlegendentry {SOBMOR}
\end{axis}

\end{tikzpicture} \\ (b) SSO MOR results
  \end{tabular}
  \caption{Comparison of the $\hinf$ error of structure-preserving MOR methods with non-structure-preserving MOR methods for varying reduced state dimensions $r$. Note that we have used a first-order representation of the second-order model to compute the BT ROMs in (b). Therefore, the BT ROMs are also in first-order form unlike the SO-BT and SOBMOR ROMs.}
  \label{fig:introplot}
\end{figure}

The motivation for developing our new approach is the observation that for the same reduced order, the structure-preserving variants of the balancing and interpolatory MOR methods often lead to a lower accuracy of the input/output map than their non-structure-preserving counterparts. This is illustrated in Figure~\ref{fig:introplot}, in which we display the $\hinf$ error for different ROMs with varying reduced state-space dimensions including our new approach. The underlying models and MOR algorithms are explained in the following section. Here we want to emphasize the much lower accuracy (of several orders of magnitude) of the structure-preserving methods compared to the non-structure-preserving ones. On the other hand, our direct optimization approach allows the computation of ROMs with a prescribed model structure that have an $\hinf$ (and also $\mathcal{H}_2$) error that is closer to the respective error of ROMs created with non-structure-preserving MOR.

Our paper is organized as follows. In the next section, we discuss a few basics of MOR and briefly explain the structure-preserving methods that we use for a comparison with our approach. After that, in Section~\ref{sec:our_method}, we derive our method. In particular, we discuss parametrizations of the system structures under consideration and the gradients of the error transfer functions between the FOM and the ROM with respect to the ROM parameters in Theorems~\ref{thm::pHgradients} and~\ref{thm:mechGradients}. This enables us to use gradient-based optimization to reduce this error and to achieve a close matching between the FOM and the ROM. Certain implementation variants are introduced in Section~\ref{sec:implVar} while in Section~\ref{sec:NumExp} we test our algorithm on a couple of benchmark systems. There we will also see the superior approximation quality compared to the methods existing in the literature. 

\section{Background}
In this section we give a few details on the most popular methods for MOR of linear time-invariant systems. We also briefly outline how these methods can be adapted to preserve model structures. In particular, we give some details on the methods we use for benchmarking our approach.
\label{sec:background}

\subsection{Model Order Reduction Overview}
In the general linear MOR framework we consider a FOM of the form
\begin{align*}
  \Sigma:\;
  \begin{cases}
    \dot x(t) = Ax(t) + B u(t),\\
         y(t) = Cx(t) + D u(t),
  \end{cases}
\end{align*}
where $A\in \R^{\nx \times \nx}$, $B\in\R^{\nx \times \enu}$, $C\in\R^{\ny \times \nx}$, and $D \in \R^{\ny \times \enu}$. The goal of MOR is finding a ROM 
\begin{align*}
  \Sigma_r:\;
  \begin{cases}
    \dot x_r(t) = A_rx_r(t) + B_r u(t),\\
         y_r(t) = C_rx_r(t) + D_r u(t),
  \end{cases}
\end{align*}
where $A_r\in \R^{r \times r}$, $B_r\in\R^{r \times \enu}$, $C_r\in\R^{\ny \times r}$, and $D_r \in \R^{\ny \times \enu}$ and $r \ll \nx$. Moreover, $\| y-y_r \|$ should be small relative to all admissible inputs $u(\cdot)$ in an appropriate function space and if $\Sigma$ is asymptotically stable (equivalently, all eigenvalues of $A$ have negative real part), one typically requests that also $\Sigma_r$ is asymptotically stable \cite{Ant05}.

To measure the reduction error one usually considers the transfer functions  \begin{equation*}
     G(s) := C(sI_{n_x} - A)^{-1} B + D, \quad G_r(s) = C_r(sI_r - A_r)^{-1} B_r + D_r.
 \end{equation*}
 If $\Sigma$ and $\Sigma_r$ are asymptotically stable, then $G$ and $G_r$ are elements of the \emph{Hardy spaces}
 \begin{align*}
  \mathcal{H}_{2}^{\ny \times \enu} &:= \left\{ H : \C^+ \to \C^{\ny \times \enu} \; \bigg| \; H \text{ is analytic and } \sup_{\sigma > 0} \int_{-\infty}^\infty {\|H(\sigma+\ri\omega)\|}_{\rm F}^2 \mathrm{d}\omega < \infty \right\}, \\
  \mathcal{H}_{\infty}^{\ny \times \enu} &:= \left\{ H : \C^+ \to \C^{\ny \times \enu} \; \bigg| \; H \text{ is analytic and } \sup_{\lambda \in \C^+} {\|H(\lambda)\|}_2 < \infty \right\},
 \end{align*}
 where $\C^+ := \{ \lambda \in \C \;|\; \Real(\lambda) > 0 \}$.
 Since in our setting, both $G$ and $G_r$ are real-rational functions, we also make use of the real-rational subspaces of $\mathcal{H}_{2}^{\ny \times \enu}$ and $\mathcal{H}_{\infty}^{\ny \times \enu}$ which we denote by $\mathcal{RH}_{2}^{\ny \times \enu}$ and $\mathcal{RH}_{\infty}^{\ny \times \enu}$, respectively.
 Since any function $G \in \mathcal{RH}_{2}^{\ny \times \enu}$ or $G \in \mathcal{RH}_{\infty}^{\ny \times \enu}$ is analytic in $\C^+$ and rational, it can be uniquely analytically extended to the imaginary axis and thus, $G(\ri \omega)$ is well-defined for all $\omega \in \R$. The spaces $\mathcal{RH}_{2}^{\ny \times \enu}$ and $\mathcal{RH}_{\infty}^{\ny \times \enu}$ are normed spaces ($\mathcal{RH}_{2}^{\ny \times \enu}$ is even an inner product space) and are equipped with the norms
 \begin{equation}
  \left\| G \right\|_{\mathcal{H}_2} := \left(\frac{1}{2\pi}\int_{-\infty}^\infty {\|G(\ri\omega)\|}_{\rm F}^2 \mathrm{d}\omega\right)^{1/2}, \quad \left\| G \right\|_{\mathcal{H}_\infty} := \sup_{\omega \in \R} {\|G(\ri\omega)\|}_2.
 \end{equation}
 With these concepts at hand, the error of the reduction procedure can by quantified by $\left\| G-G_r \right\|_{\mathcal{H}_2}$ or $\left\| G -G_r\right\|_{\mathcal{H}_\infty}$ and can often either be optimized or bounded by the reduction algorithms.

We briefly mention the ideas of some of the most popular reduction algorithms, namely \emph{balanced truncation (BT)} and the \emph{iterative rational Krylov algorithm (IRKA)}, see the textbooks \cite{Ant05,AntBG20}. Note that both methods assume that the FOM is asymptotically stable.
BT is based on energy functionals related to the system. For some state $x_0\in\R^{\nx}$ these functionals measure the input energy needed to steer the state from zero to $x_0$ (when this is possible) and the energy that can be extracted from the output, if the initial state is $x_0$. Both energy functionals are quadratic forms of $x_0$ which are described by the symmetric and positive semidefinite controllability and observability Gramians $\cP,\,\cQ \in \R^{\nx \times \nx}$. In BT one computes a balancing transformation by an invertible matrix $T\in\R^{\nx \times \nx}$ such that the transformed Gramians are diagonal and equal, i.\,e.,
\begin{equation*}
T^{-1} \cP T^{-\T} = T^{\T} \cQ T = \operatorname{diag}(\sigma_1,\,\ldots,\,\sigma_n), \quad \sigma_1 \ge \sigma_2 \ge \ldots \ge \sigma_n \ge 0.
\end{equation*}
The matrix $T$ can be obtained from the Gramians $\cP$ and $\cQ$ whose computation or approximation is the most expensive part of the method. More precisely, $\cP$ and $\cQ$ are the unique solutions of the \emph{Lyapunov equations}
\begin{equation*}
 A \cP  + \cP A^\T + BB^\T = 0, \quad A^\T \cQ + \cQ A + C^\T C = 0.
\end{equation*}
If the system is controllable and observable, then $\cP > 0$ and $\cQ > 0$. Then with the Cholesky factorizations $\cP = \cR \cR^\T$ and $\cQ = \cL \cL^\T$ and the singular value decomposition $\cL^\T \cR = U \Sigma V^\T$ we obtain the transformation matrix $T = \cR V\Sigma^{-\frac{1}{2}}$ (with $T^{-1} = \Sigma^{-\frac{1}{2}}U^\T \cL^\T =: W^\T$).
Partitioning $T = \begin{bmatrix} T_r, \, \widetilde{T} \end{bmatrix}$ and $W^\T = \begin{bmatrix} W_r,\, \widetilde{W} \end{bmatrix}^\T$, where $T_r,\,W_r \in \R^{\nx \times r}$, we finally obtain the ROM by $A_r = W_r^\T A T_r$, $B_r = W_r^\T B$, $C_r = CT_r$, and $D_r = D$. If $\sigma_r > \sigma_{r+1}$, then the ROM constructed in this way is asymptotically stable and 
\begin{equation*}
\left\| G -G_r\right\|_{\mathcal{H}_\infty} \le 2\sum_{j=r+1}^{\nx} \sigma_j.
\end{equation*}

On the other hand, IRKA is based on interpolating the transfer function of the FOM and optimizing the interpolation points such that a locally $\mathcal{H}_2$ optimal ROM is obtained. The ROM is again obtained by projection, i.\,e., it is given by $A_r = W_r^\T A T_r$, $B_r = W_r^\T B$, $C_r = CT_r$, and $D_r = D$ for some appropriately chosen projection matrices $T_r,\,W_r \in \R^{\nx \times r}$ with $W_r^\T T_r = I_r$. The algorithm makes use of the following fact: If $\{s_1,\,\ldots,\,s_r\} \subset \C$ are chosen such that $s_iI_{n_x}-A$ and $s_iI_{r}-A_r$ are invertible for $i=1,\,\ldots,\,r$ and $\{b_1,\,\ldots,\,b_r\} \subset \C^{\enu}$ and $\{c_1,\,\ldots,\,c_r\} \subset \C^{\ny}$ are given right and left \emph{tangential directions}, then
\begin{align*}
    (s_i I_{\nx}-A)^{-1}Bb_i \in \range(T_r), \quad \big(c_i^\H C(s_i I_{\nx}-A)^{-1}\big)^{\H} \in \range(W_r)
\end{align*}
(where $\range( \cdot )$ denotes the image of its matrix argument) implies that 
\begin{equation}\label{eq:IRKAcond}
  G(s_i) b_i = G_r(s_i) b_i, \quad c_i^\H G(s_i) = c_i^\H G_r(s_i), \quad c_i^\H G'(s_i) b_i = c_i^\H G_r'(s_i) b_i.
\end{equation}
It has been shown in \cite{bunse-gerstner2010, GugercinH22008, van_dooren_h2-optimal_2008} that interpolation points and tangential directions that lead to $\htwo$ optimal ROMs satisfy the first-order necessary optimality conditions (which are tangential Hermite interpolation conditions)
\begin{equation} \label{eq:IRKAopt}
  G(-\lambda_i)\hat{b}_i = G_r(-\lambda_i)\hat{b}_i, \quad \hat{c}_i^\H G(-\lambda_i) = \hat{c}_i^\H G_r(-\lambda_i),\quad \hat{c}_i^\H G'(-\lambda_i) \hat{b}_i = \hat{c}_i^\H G_r'(-\lambda_i) \hat{b}_i, \quad i=1,\,\ldots,\,r.
\end{equation}
Here, $\{\lambda_1,\,\ldots,\,\lambda_r\}$ are the eigenvalues of $A_r$ and $\hat{c}_i := C_rx_i$ and $\hat{b}_i := B_r^\T y_i$, where $\{x_1,\,\ldots,\,x_r\}$ and $\{y_1,\,\ldots,\,y_r\}$ are the corresponding right and left eigenvectors with $y_i^\H x_i = 1$ for $i=1,\,\ldots,\,r$. Since these eigenvalues and eigenvectors are not known a priori, IRKA sets up a fixed point iteration in order to converge to these by alternatingly updating the ROM and computing new interpolation points and tangential directions from this ROM. For the details, we refer to \cite{GugercinH22008}. 

\subsection{Structure-Preserving Model Order Reduction Methods}\label{subsec:struct}
In this section we briefly describe some extensions of BT and IRKA to the system structures that are considered in this paper. In particular, these are the methods that we use for comparison with our new approach in Section~\ref{sec:NumExp}. We remark that, besides often being of worse approximation quality, the structure-preserving methods often lack desirable theoretical properties such as the existence of an error bound. 

BT is adapted in \cite{Polyuga2010} in several ways to preserve the pH structure. One method that is presented there is called \emph{effort constraint balanced truncation (pH-BT)}. In pH-BT, the balancing transformation $T$ from standard BT is applied to the system $\Sigma_{\text{pH}}$ to construct the balanced pH system as
\begin{align*}
  \Sigma_{\text{pH,bal}}:\,
  \begin{cases}
    \dot x_{\rm b}(t) = (J_{\rm b}-R_{\rm b})Q_{\rm b}x_{\rm b}(t)+B_{\rm b} u(t), \\
    y_{\rm b}(t)=B_{\rm b}^\T Q_{\rm b} x_{\rm b}(t),
  \end{cases}
\end{align*}
where $J_{\rm b} := T^{-1} J T^{-\T}$, $R_{\rm b} : = T^{-1} R T^{-\T}$, $Q_{\rm b} := T^{\T} Q T$, and $B_{\rm b} = T^{-1} B$. Note that this transformation preserves the pH structure. By partitioning the transformed matrices as
\begin{equation*}
 J_{\rm b} = \begin{bmatrix} J_{11} & J_{12} \\ J_{21} & J_{22} \end{bmatrix}, \quad R_{\rm b} = \begin{bmatrix} R_{11} & R_{12} \\ R_{21} & R_{22} \end{bmatrix}, \quad Q_{\rm b} = \begin{bmatrix} Q_{11} & Q_{12} \\ Q_{21} & Q_{22} \end{bmatrix}, \quad B_{\rm b} = \begin{bmatrix} B_{1} \\ B_{2} \end{bmatrix}
\end{equation*}
with $J_{11},\,R_{11},\,Q_{11} \in \R^{r \times r}$, a pH ROM with state-space dimension $r$ is then defined by
\begin{align*}
\Sigma_{\text{pH,r}}:\;
  \begin{cases}
    \dot x_r(t) = (J_{11}-R_{11})\big(Q_{11}-Q_{12}Q_{22}^{-1}Q_{21}\big)x_r(t) +B_1 u(t), \\
         y_r(t) = B_1^\T \big(Q_{11}-Q_{12}Q_{22}^{-1}Q_{21}\big) x_r(t).
  \end{cases}
\end{align*}

The adaption of IRKA to pH systems, called \emph{pH-IRKA}, is studied in \cite{Gugercin2012}. There the construction of a pH interpolant is achieved by using the following observation: If we construct the right projection matrix $T_r \in \R^{\nx \times r}$ such that for a set of interpolation points $\{s_1,\,\ldots,\,s_r\} \subset \C$ (assumed to be closed under complex conjugation) and tangential directions $\{b_1,\,\ldots,\,b_r\} \subset \C^{\enu}$ we have that
\begin{equation*}
  (s_i I_{\nx}-(J-R)Q)^{-1}B b_i \in \range(T_r), \quad i=1,\,\ldots,\,r
\end{equation*}
and, moreover, choose the left projection matrix as $W_r := Q T_r\big(T_r^\T Q T_r\big)^{-1}$, then a pH ROM is given by
\begin{align*}
  \Sigma_{\text{pH,r}}:
  \begin{cases}
    \dot x_r(t)=(J_r-R_r)Q_rx_r(t)+B_r u(t), \\
         y_r(t)=B_r^{\T}Q_r x_r(t),
  \end{cases}
\end{align*}
where $J_r = W_r^\T J W_r$, $Q_r=T_r^\T Q T_r$, $R_r=W_r^\T R W_r$, and $B_r=W_r^\T B$. This ROM does, in general, only fulfill the first of the tangential interpolation conditions in~\eqref{eq:IRKAcond}, but a fixed-point iteration with the flavor of standard IRKA can be set up to obtain a ROM that fulfills at least the first of the $\mathcal{H}_2$ optimality conditions in \eqref{eq:IRKAopt}.

A modification of BT to SSO systems, called \emph{second-order balanced truncation (SO-BT)}, is discussed in \cite{Reis2008}. There, the original SSO system is rewritten as a symmetric first-order system of the form
\begin{align*}
  \Sigma_{\text{fo}}:\;
  \begin{cases}
    \begin{bmatrix}
    I_{\nx} & 0 \\ 0 & M
  \end{bmatrix} \dot q(t) = \begin{bmatrix}
    0 & I_{\nx} \\ -K & -D
  \end{bmatrix} q(t) + \begin{bmatrix}
    0 \\ B
  \end{bmatrix} u(t), \\
    \hphantom{\begin{bmatrix}
    I_{\nx} & 0 \\ 0 & M
  \end{bmatrix}}y(t) = \begin{bmatrix} B^\T & 0 \end{bmatrix} q(t).
  \end{cases}
\end{align*}
Under the condition that $K$ and $M$ are invertible, $\Sigma_{\text{fo}}$ is a system of ordinary differential equations and if it is asymptotically stable, we can apply standard BT. The corresponding controllability and observability Gramians can be partitioned according to the block structure of the system, i.\,e.,
\begin{equation*}
 \cP = \begin{bmatrix} \cP_{\rm p} & \cP_{12} \\ \cP_{12}^\T & \cP_{\rm v} \end{bmatrix}, \quad \cQ = \begin{bmatrix} \cQ_{\rm p} & \cQ_{12} \\ \cQ_{12}^\T & \cQ_{\rm v} \end{bmatrix}. 
\end{equation*}
In \cite{Reis2008} it has been shown that in our case, $\cP_{\rm p} = \cQ_{\rm v}$. Moreover, if the system $\Sigma_{\text{fo}}$ is controllable and observable, then there exists a Cholesky factorization $\cP_{\rm p} = \cR_{\rm p} \cR_{\rm p}^\T$ with an  invertible matrix $\cR_{\rm p} \in \R^{\nx \times \nx}$ and we can compute the singular value decomposition $\cR_{\rm p}^\T M \cR_{\rm p} = U_{\rm p} \Sigma U_{\rm p}^\T$. Then with the transformation matrix $T:= \cR_{\rm p} U_{\rm p} \Sigma^{-\frac{1}{2}}$ we obtain the \emph{position/velocity balanced system}
\begin{align*}
  \Sigma_{\text{sso,bal}}:\;
  \begin{cases}
    T^\T M T \ddot x_{\rm b}(t) + T^\T D T \dot x_{\rm b}(t)+ T^\T K T x_{\rm b}(t) = T^\T B u(t),\\
    \phantom{T^\T M T \ddot x_{\rm b}(t) + T^\T D T \dot x_{\rm b}(t)+ T^\T K T }y_{\rm b}(t) = B^{\T} T x_{\rm b}(t),
  \end{cases}
\end{align*}
in which the correspondingly transformed position/velocity Gramians $\cP_{\rm p,bal}$ and $\cQ_{\rm v,bal}$ are diagonal and equal. By partitioning $T = \begin{bmatrix} T_r, \, \widetilde{T} \end{bmatrix}$, we finally obtain a SSO ROM as in \eqref{eq:mechrom} with the desired symmetries by setting $M_r = T_r^\T M T_r$, $K_r = T_r^\T K T_r$, $D_r = T_r^\T D T_r$, and $B_r = T_r^\T B$. 

Note that an adaption of IRKA to second-order systems is still an open problem, but some advances in the direction of optimality conditions are available in \cite{BeaB14}. 

\section{Our Method}%
\label{sec:our_method}
Our method for computing low-order pH or SSO realizations that approximate the input/output map of a given system is based on making an ansatz for a parametrized low-order pH or SSO model, respectively, and then directly optimizing the model parameters. The objective functional that is minimized is based on the difference of the transfer functions of the FOM and ROM evaluated on the imaginary axis. 

Therefore, we first present a parametrization of pH and SSO systems. 
Then we discuss our proposed objective functional and explain our optimization 
procedure. This optimization is gradient-based, so we further derive the 
gradients that are involved in this process.

\subsection{Structured Parametrized Transfer Functions}

Throughout the next sections, we will rely on the following functions to manage 
the construction of the system matrices from parameter vectors.
\begin{definition}[Reshaping operations]
  \begin{enumerate}[a)] 
    \item The function family  
      \begin{align*}
        \vtf_m: \C^{n \cdot m} \rightarrow \C^{n\times m}, \quad  v \mapsto
        \begin{bmatrix}
          v_1 & v_{n+1} & \dots & v_{m(n-1)+1}\\
          v_2 & v_{n+2} & \dots & v_{m(n-1)+2}\\
          \vdots & \vdots &  & \vdots \\
          v_n & v_{2n} & \dots & v_{nm}
        \end{bmatrix}
      \end{align*}
      reshapes a vector into an accordingly sized matrix with $m$ columns. Note that $\vtf$ can be read as vector-to-full (matrix).
      Its inverse is given by
      \begin{align*}
        \ftv: \C^{n \times m}  \rightarrow \C^{n \cdot m}, \quad A \mapsto
        \begin{bmatrix}
          a_{1,1} &a_{2,1} &\dots & a_{n,1} & a_{1,2} & \dots & a_{n,m}
        \end{bmatrix}^\T,
      \end{align*}
      where $\ftv$ stands for full (matrix)-to-vector. The latter is the standard vectorization operator which is commonly denoted by $\operatorname{vec}$.
    \item The function 
      \begin{align*}
        \vtu : \C^{n(n+1)/2}  \rightarrow \C^{n\times n}, \quad v \mapsto
        \begin{bmatrix}
          v_1 & v_2 & \dots & v_n \\
          0 & v_{n+1}& \dots & v_{2n-1} \\
          \vdots & \vdots & \ddots & \vdots \\
          0 & 0 & \dots & v_{n(n+1)/2} \\
        \end{bmatrix}
      \end{align*}
      maps a vector of length $n(n+1)/2$ to an $n\times n$ upper triangular 
      matrix (where $\vtu$ stands for vector-to-upper (triangular)), while the function 
      \begin{align*}
        \utv : \C^{n\times n}  \rightarrow \C^{n(n+1)/2}, \quad A \mapsto
        \begin{bmatrix}
          a_{1,1} & a_{1,2} & \dots & a_{1,n} & a_{2,2} & \dots & a_{n,n}
        \end{bmatrix}^\T
      \end{align*}
      maps the upper triangular part of an $n\times n$ matrix to a vector (where $\utv$ stands for upper (triangular)-to-vector).
    \item The function
      \begin{align*}
        \vtsu : \C^{n(n-1)/2}  \rightarrow \C^{n\times n}, \quad v \mapsto
        \begin{bmatrix}
          0 & v_1 & v_2 & \dots  & v_{n-1} \\
          0 & 0   & v_n & \dots  & v_{2n-3}  \\
          \vdots & \vdots  & \vdots   & \ddots & \vdots \\
          0 & 0   & 0   &  \dots     & v_{n(n-1)/2} \\
          0 & 0   & 0   & \dots      & 0\\
        \end{bmatrix}
      \end{align*}
      maps a vector of length $n(n-1)/2$ to an $n\times n$ strictly upper 
      triangular matrix (where $\vtsu$ stands for vector-to-strictly upper (triangular)), while the function 
      \begin{align*}
        \sutv : \C^{n \times n}  \rightarrow \C^{n(n-1)/2}, \quad A \mapsto
        \begin{bmatrix}
          a_{1,2} & a_{1,3} & \dots & a_{1,n} & a_{2,3} & \dots & a_{n-1,n}
        \end{bmatrix}^\T
      \end{align*}
       maps the strictly upper triangular part of an $n\times n$ matrix to a 
       vector (where $\sutv$ means strictly upper (triangular)-to-vector).
  \end{enumerate}
\end{definition}

Using these reshaping operations, we can define a parametrization of pH and SSO systems as follows.

\subsubsection{Parametrized Port-Hamiltonian Systems}
We provide a parametrization of a pH system that automatically satisfies the given structural constraints on the system matrices. The following lemma is a direct consequence of this structure.

\begin{lemma}
  Let $\theta \in \R^{n_\theta}$ be a parameter vector with $n_\theta=n_x\left(\frac{3n_x+1}{2}+n_u\right)$. 
  Furthermore, let $\theta$ be partitioned as  $\theta:=\begin{bmatrix}\theta_J^\T,\,\theta_R^\T,\,\theta_Q^\T,\,\theta_B^\T\end{bmatrix}^\T$ with $\theta_J\in\R^{n_x(n_x-1)/2}$, $\theta_R,\,\theta_Q\in\R^{n_x(n_x+1)/2}$, and $\theta_B\in\R^{n_x \cdot n_u}$. Further define the matrices
  \begin{subequations}
  \begin{align}
    \pJ(\theta) &=\vtsu(\theta_J)^\T-\vtsu(\theta_J),\label{eq:pJ}\\
    \pR(\theta) &=\vtu(\theta_R)^\T  \vtu(\theta_R),\\
    \pQ(\theta) &=\vtu(\theta_Q)^\T  \vtu(\theta_Q),\label{eq:pQ}\\
    \pB(\theta) &=\vtf_{n_u}(\theta_B).
  \end{align}
  \end{subequations}
  Then, to each $\theta \in \R^{n_\theta}$ one can assign the pH system
  \begin{align} \label{eq:pHdelta}
    \pHsys(\theta): \;
    \begin{cases}
      \dot x(t) = \left( \pJ(\theta)-\pR(\theta) \right)\pQ(\theta) x(t)+ \pB(\theta) u(t),\\
      y(t) = \pB(\theta)^\T \pQ(\theta) x(t).
    \end{cases}
  \end{align}
  Conversely, to each pH system $\pHsys$ with $\nx$ states and $n_u$ inputs and outputs one can assign a vector $\theta \in \R^{n_\theta}$ such that $\pHsys = \pHsys(\theta)$ with $\pHsys(\theta)$ as in \eqref{eq:pHdelta}.
\end{lemma}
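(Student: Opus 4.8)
The plan is to verify the two directions of the claimed correspondence separately, the first being essentially a bookkeeping check and the second requiring a factorization argument. For the forward direction, I would start from an arbitrary $\theta \in \R^{n_\theta}$ partitioned as stated and simply observe that the construction \eqref{eq:pJ}--\eqref{eq:pQ} produces matrices with the right structural properties: $\pJ(\theta) = \vtsu(\theta_J)^\T - \vtsu(\theta_J)$ is skew-symmetric by construction, hence $\pJ(\theta) = -\pJ(\theta)^\T$; and $\pR(\theta) = \vtu(\theta_R)^\T\vtu(\theta_R)$ and $\pQ(\theta) = \vtu(\theta_Q)^\T\vtu(\theta_Q)$ are Gram matrices of the form $X^\T X$, hence symmetric positive semidefinite. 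Since $\pB(\theta) = \vtf_{n_u}(\theta_B) \in \R^{n_x \times n_u}$ is unconstrained, the tuple $(\pJ(\theta), \pR(\theta), \pQ(\theta), \pB(\theta))$ satisfies exactly the defining constraints $J = -J^\T$, $R \succeq 0$, $Q \succeq 0$ of a pH system, so \eqref{eq:pHdelta} is a genuine pH system with $n_x$ states and $n_u$ inputs/outputs. I would also note the dimension count $n_\theta = \tfrac{n_x(n_x-1)}{2} + \tfrac{n_x(n_x+1)}{2} + \tfrac{n_x(n_x+1)}{2} + n_x n_u = n_x\!\left(\tfrac{3n_x+1}{2} + n_u\right)$ to confirm consistency of the partition.

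For the converse direction, I would take an arbitrary pH system $\pHsys$ given by matrices $J, R, Q \in \R^{n_x \times n_x}$, $B \in \R^{n_x \times n_u}$ with $J = -J^\T$, $R \succeq 0$, $Q \succeq 0$, and construct a preimage $\theta$. Since $J$ is skew-symmetric, its strictly-upper-triangular part determines it completely, so I set $\theta_J := \sutv(J)$, and then $\vtsu(\theta_J)^\T - \vtsu(\theta_J) = J$ because $\vtsu \circ \sutv$ recovers the strictly upper triangular part. Since $R \succeq 0$, it admits a factorization $R = L_R^\T L_R$ with $L_R$ upper triangular (e.g.\ a Cholesky-type factorization, valid for positive semidefinite matrices possibly after a symmetric permutation, or simply $L_R$ the upper-triangular Cholesky factor when $R \succ 0$; in the semidefinite case one can take $L_R$ to be the upper-triangular factor from a pivoted Cholesky decomposition, padding with zeros). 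Setting $\theta_R := \utv(L_R)$ gives $\pR(\theta) = \vtu(\theta_R)^\T \vtu(\theta_R) = L_R^\T L_R = R$, and similarly $\theta_Q := \utv(L_Q)$ with $Q = L_Q^\T L_Q$; finally $\theta_B := \ftv(B)$ so that $\vtf_{n_u}(\theta_B) = B$. Concatenating these four blocks yields $\theta \in \R^{n_\theta}$ with $\pHsys(\theta) = \pHsys$.

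The main obstacle — modest, but the only non-routine point — is the positive semidefinite factorization $R = L_R^\T L_R$, $Q = L_Q^\T L_Q$ with \emph{upper triangular} factors: the standard Cholesky algorithm requires strict positive definiteness, whereas the lemma only assumes $\succeq 0$. I would address this by invoking the existence of a (possibly pivoted) Cholesky-type decomposition for positive semidefinite matrices, or alternatively by remarking that $R^{1/2}$ (the symmetric PSD square root) can be transformed to upper triangular form via a $QR$ factorization $R^{1/2} = U_R L_R$ with $U_R$ orthogonal and $L_R$ upper triangular, whence $R = R^{1/2}R^{1/2} = L_R^\T U_R^\T U_R L_R = L_R^\T L_R$; the same argument applies to $Q$. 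Everything else reduces to checking that the reshaping maps $\vtsu, \vtu, \vtf$ from the preceding definition are the claimed one-sided inverses of $\sutv, \utv, \ftv$ on the relevant structured subspaces, which is immediate from their explicit descriptions.
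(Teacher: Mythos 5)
The paper offers no proof of this lemma (it is introduced as ``a direct consequence of this structure''), so your write-up is supplying details the authors omit rather than paralleling an existing argument. Your overall plan is the right one and essentially complete: the forward direction is indeed just the observations that $\vtsu(\theta_J)^\T-\vtsu(\theta_J)$ is skew-symmetric, that Gram matrices $X^\T X$ are symmetric positive semidefinite, and that the block sizes add up to $n_\theta$; and for the converse the only non-routine point is, as you say, producing an upper triangular $L$ with $R=L^\T L$ for a merely semidefinite $R$. Your QR-based argument ($R^{1/2}=U_RL_R$, hence $R=L_R^\T L_R$) settles that cleanly. The ``pivoted Cholesky'' aside is the weaker of your two options, since pivoting yields $P^\T R P=L^\T L$ and $LP^\T$ is generally not triangular, but you do not need it given the QR argument.

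One concrete error: in the converse direction you set $\theta_J:=\sutv(J)$ and claim $\vtsu(\theta_J)^\T-\vtsu(\theta_J)=J$. Writing $U$ for the strictly upper triangular part of the skew-symmetric matrix $J$, one has $J=U-U^\T$, whereas your choice gives $\pJ(\theta)=U^\T-U=-J$. The fix is immediate — take $\theta_J:=-\sutv(J)$ (equivalently $\sutv(J^\T)$) — and surjectivity is unaffected since negation is a bijection of $\R^{n_x(n_x-1)/2}$, but as written the identity is false by a sign.
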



In the next result, we provide the gradients of the singular values of the difference between a given transfer function $G \in \mathcal{RH}_\infty^{n_u \times n_u}$ (of a possibly large-scale system) and the transfer function of the pH system~\eqref{eq:pHdelta}, given by
\begin{equation} \label{eq:phtf}
\pHtf(s,\theta)=\pB(\theta)^\T \pQ(\theta)\left(sI_{n_x}-\left(\pJ(\theta)-\pR(\theta)\right)\pQ(\theta)\right)^{-1}\pB(\theta),
\end{equation}
evaluated at some given value $s_0 \in \overline{\C^+} := \{ \lambda \in \C \;|\; \Real(\lambda) \ge 0 \}$ with respect to the parameter vector~$\theta$.

\begin{theorem} 
  \label{thm::pHgradients}
  Let $\theta_0 \in \R^{n_\theta}$ be given and assume that $G \in \mathcal{RH}_\infty^{n_u \times n_u}$ and $\pHtf(\cdot,\theta_0) \in \rhinf^{n_u \times n_u}$ as in \eqref{eq:phtf}. Suppose further that for given $s_0 \in \overline{\C^+}$, the $j$-th singular value of $G(s_0)-\pHtf(s_0,\theta_0)$ is nonzero and simple and let $\uu \in \C^{n_u}$ and $\vv \in \C^{n_u}$ be the corresponding left and right singular vectors, respectively.
  
  Then the function $\theta \mapsto \sigma_j(G(s_0)-\pHtf(s_0,\theta))$, where $\sigma_j$ denotes the $j$-th singular value, is differentiable in a neighborhood of $\theta_0$. Moreover, we define the short-hand notations $J_0:=J(\theta_0),\,R_0:=R(\theta_0),\,Q_0:=Q(\theta_0),\,B_0:=B(\theta_0)$ and suppose that
  \begin{align*}
    \dyn&:=s_0 I_{n_x} -\left(\pJ_0-\pR_0\right)\pQ_0
  \end{align*}
  is invertible. With
  \begin{align*}
    \dJ &:=-\pQ_0 \dyn^{-1}  \inp_0 \vv \uu^\H \pB_0^{\T} \pQ_0  \dyn^{-1}, \\
    \dQ &:=\dyn^{-1} \inp_0  \vv  \left(\uu^\H \pB_0^\T+ \uu^{\H}  \pB_0^{\T} \pQ_0  \dyn^{-1}  \left(\pJ_0-\pR_0\right)\right),
  \end{align*}
  we obtain the gradient
  \begin{align*}
    \nabla_\theta \sigma_j(G(s_0)-\pHtf(s_0,\theta_0)):=\begin{bmatrix}
    \dta^\T,\, \dtb^\T,\, \dtc^\T,\, \dtd^\T
    \end{bmatrix}^\T,
  \end{align*}
  where
  \begin{subequations}
      \begin{align}
        \dta &= \Real\left( \sutv(-\dJ+\dJ^\T)\right),\label{eq::dta}\\
        \dtb &= \Real\left( \utv \left(\vtu(\theta_R) \dJ+\vtu(\theta_R) \dJ^\T\right)\right),\label{eq::dtb}\\
        \dtc &= \Real\left( \utv \left(\vtu(\theta_Q) \dQ+\vtu(\theta_Q) \dQ^\T\right)\right),\label{eq::dtc}\\
        \dtd &= \Real\left( \ftv \left( (\vv \uu^\H  \pB_0^{\T} \pQ_0 \dyn^{-1})^\T+\pQ_0\dyn^{-1}\inp_0 \vv \uu^\H \right)\right).\label{eq::dtd}
      \end{align}
  \end{subequations}
\end{theorem}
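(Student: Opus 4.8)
The plan is to combine the classical first-order perturbation formula for a simple singular value with matrix calculus applied to $\pHtf$ and the chain rule through the reshaping maps of the parametrization.

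First I would settle differentiability. Write $\mathcal{F}(\theta) := s_0 I_{\nx} - (\pJ(\theta)-\pR(\theta))\pQ(\theta)$, so that $\dyn = \mathcal{F}(\theta_0)$. Since $\theta \mapsto (\pJ(\theta),\pR(\theta),\pQ(\theta),\pB(\theta))$ is polynomial and $\dyn$ is invertible, $\mathcal{F}(\theta)$ stays invertible on a neighbourhood of $\theta_0$, so $\theta \mapsto \pHtf(s_0,\theta)$ is a smooth $\C^{\enu\times\enu}$-valued map there (the $\rhinf$ assumptions just guarantee that $G(s_0)$ and $\pHtf(s_0,\theta_0)$ are well defined), and hence so is $M(\theta) := G(s_0)-\pHtf(s_0,\theta)$. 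Because $\sigma_j(M(\theta_0))$ is simple and nonzero, simplicity persists nearby and $\theta \mapsto \sigma_j(M(\theta))$ is smooth; differentiating the identity $\sigma_j = \uu^\H M \vv$ and using $M\vv=\sigma_j\uu$, $M^\H\uu=\sigma_j\vv$ together with $\Real(\uu^\H \dd\uu)=\Real(\vv^\H\dd\vv)=0$ (from $\|\uu\|=\|\vv\|=1$) gives, with all differentials evaluated at $\theta_0$ in an arbitrary direction $\dd\theta = [\dd\theta_J^\T,\dd\theta_R^\T,\dd\theta_Q^\T,\dd\theta_B^\T]^\T$,
\[
  \dd\sigma_j \;=\; \Real\big(\uu^\H (\dd M)\,\vv\big) \;=\; -\Real\big(\uu^\H (\dd\pHtf)\,\vv\big),
\]
so it remains to compute $\dd\pHtf$ and pair it with $\uu,\vv$.

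Next I would differentiate $\pHtf(s_0,\theta) = \pB^\T \pQ \,\mathcal{F}(\theta)^{-1}\,\pB$ using the product rule together with $\dd(\mathcal{F}^{-1}) = -\mathcal{F}^{-1}(\dd\mathcal{F})\mathcal{F}^{-1}$ and $\dd\mathcal{F} = -(\dd\pJ - \dd\pR)\pQ - (\pJ - \pR)\,\dd\pQ$. Evaluating at $\theta_0$ produces four groups of terms carrying $\dd\pJ$, $\dd\pR$, $\dd\pQ$, and $\dd\pB$, respectively; note that $\pJ$ and $\pR$ enter $\mathcal{F}$ in the same slot with opposite signs, while $\pQ$ appears both in the leading factor $\pB_0^\T\pQ_0$ and inside $\mathcal{F}$. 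Then, in each group I would use $\uu^\H X\vv = \tr(X\vv\uu^\H)$ and cyclic invariance of the trace to bring the perturbation matrix to the front, multiplied by a fixed matrix assembled from $\dyn^{-1},\pB_0,\pQ_0,\pJ_0,\pR_0,\uu,\vv$. The $\dd\pJ$- and $\dd\pR$-contributions then collapse (with opposite signs) to $\tr(\dd\pJ\,\dJ)$ and $\tr(\dd\pR\,\dJ)$ with $\dJ$ exactly as in the statement; summing the two $\dd\pQ$-contributions gives $\tr(\dd\pQ\,\dQ)$ with $\dQ$ as given; and the two $\dd\pB$-contributions combine into a trace against the matrix $(\vv\uu^\H\pB_0^\T\pQ_0\dyn^{-1})^\T + \pQ_0\dyn^{-1}\pB_0\vv\uu^\H$ appearing in \eqref{eq::dtd}.

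Finally I would push these matrix differentials back to $\theta$ via the chain rule. For the linear maps $\theta_J \mapsto \pJ$ and $\theta_B \mapsto \pB$ we have $\dd\pJ = \vtsu(\dd\theta_J)^\T - \vtsu(\dd\theta_J)$ and $\dd\pB = \vtf_{\enu}(\dd\theta_B)$; since $\sutv$ and $\ftv$ are, up to the obvious identifications, the adjoints of $\vtsu$ and $\vtf_{\enu}$ with respect to the real trace pairing, reading off the coefficients of $\dd\theta_J$ and $\dd\theta_B$ yields \eqref{eq::dta} and \eqref{eq::dtd}. The maps $\theta_R \mapsto \pR = \vtu(\theta_R)^\T\vtu(\theta_R)$ and $\theta_Q \mapsto \pQ = \vtu(\theta_Q)^\T\vtu(\theta_Q)$ are quadratic, so $\dd\pR = \vtu(\dd\theta_R)^\T\vtu(\theta_R) + \vtu(\theta_R)^\T\vtu(\dd\theta_R)$; pairing with $\dJ$, merging the two summands by $\tr(A^\T B) = \tr(B^\T A)$, and using that $\utv$ is the adjoint of $\vtu$ produces the symmetrized expression $\utv(\vtu(\theta_R)\dJ + \vtu(\theta_R)\dJ^\T)$ of \eqref{eq::dtb}, and analogously \eqref{eq::dtc} for $\pQ$ with $\dQ$. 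Taking real parts everywhere (since $\theta$ is real) completes the argument. The main obstacle is carrying out the last two steps cleanly and simultaneously: tracking the numerous trace terms with their signs and transpositions, and in particular handling the adjoints of the quadratic Cholesky-type parametrizations of $\pR$ and $\pQ$ — the $\dJ^\T$ and $\dQ^\T$ symmetrizations and the placement of $\Real(\cdot)$ being the delicate points.
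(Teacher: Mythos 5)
Your proposal is correct and follows essentially the same route as the paper's proof: first-order perturbation of the simple nonzero singular value, differentiation of the resolvent in $\pHtf$, cyclic invariance of the trace, and the adjoint identities of the reshaping maps (the paper isolates these as Lemma~\ref{lem::trace}). The only difference is cosmetic — you work with total differentials in an arbitrary direction, while the paper expands coordinatewise along standard basis vectors $e_i$ and treats the remaining blocks as "analogous."
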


Before we can prove Theorem~\ref{thm::pHgradients}, we need the following lemma concerning the reshaping functions.
\begin{lemma}\label{lem::trace}
  Let $A \in \C^{m \times n}$ and let $e^{(j)}_i \in \C^j$ denote the $i$-th standard basis vector of $\C^{j}$. Then
    \begin{align}
      \tr\left(A\vtf_m\big(e_i^{(nm)}\big)\right)=\big(e_i^{(nm)}\big)^\T \ftv\big(A^\T\big)\label{eq::tracevtf}.
    \end{align}
    Now let $m=n$ and define $n_1 := \frac{n(n+1)}{2}$ and $n_2 := \frac{n(n-1)}{2}$. Then
    \begin{align}
      \tr\left(A\vtu\big(e_i^{(n_1)}\big)\right) = \big(e_i^{(n_1)}\big)^\T \utv\big(A^\T\big), \label{eq::tracevtu}\\
      \tr\left(A\vtsu\big(e_i^{(n_2)}\big)\right) = \big(e_i^{(n_2)}\big)^\T \sutv\big(A^\T\big)\label{eq::tracevtsu}.
    \end{align}
\end{lemma}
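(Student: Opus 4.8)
The plan is to prove each of the three identities by a direct computation that expands the trace as a double sum over matrix entries and then recognizes the resulting expression as a single entry of the vectorization of $A^\T$. Since all three statements are of the same flavor, I would first establish \eqref{eq::tracevtf} carefully and then indicate how the triangular cases follow by the same bookkeeping, restricted to the relevant index sets.

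For \eqref{eq::tracevtf}: fix $i \in \{1,\dots,nm\}$ and write $i = (c-1)n + \rho$ with $1 \le \rho \le n$ and $1 \le c \le m$, which is exactly the column-major decomposition used in the definition of $\vtf_m$. By construction, $\vtf_m\big(e_i^{(nm)}\big)$ is the $n \times m$ matrix whose only nonzero entry is a $1$ in position $(\rho,c)$; that is, $\vtf_m\big(e_i^{(nm)}\big) = e_\rho^{(n)} \big(e_c^{(m)}\big)^\T$. Then
\begin{align*}
  \tr\left(A\, e_\rho^{(n)} \big(e_c^{(m)}\big)^\T\right)
  = \big(e_c^{(m)}\big)^\T A\, e_\rho^{(n)}
  = A_{c,\rho}
  = \big(A^\T\big)_{\rho,c}.
\end{align*}
On the other hand, $\ftv\big(A^\T\big)$ is the column-major vectorization of $A^\T \in \C^{m\times n}$, so its $((c-1)n+\rho)$-th entry is precisely $\big(A^\T\big)_{\rho,c}$, i.e.\ equals $\big(e_i^{(nm)}\big)^\T \ftv\big(A^\T\big)$. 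This proves \eqref{eq::tracevtf}.

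For \eqref{eq::tracevtu} and \eqref{eq::tracevtsu} the argument is identical in spirit: with $m=n$, the matrix $\vtu\big(e_i^{(n_1)}\big)$ (resp.\ $\vtsu\big(e_i^{(n_2)}\big)$) is the single-entry matrix $E_{pq} := e_p^{(n)}\big(e_q^{(n)}\big)^\T$, where $(p,q)$ is the position occupied by the $i$-th slot in the row-major enumeration of the (strictly) upper triangle used in the definition of $\vtu$ (resp.\ $\vtsu$); so $\tr\!\big(A E_{pq}\big) = A_{q,p} = (A^\T)_{p,q}$. By definition, $\utv(A^\T)$ (resp.\ $\sutv(A^\T)$) lists the (strictly) upper-triangular entries of $A^\T$ in exactly that same row-major order, so its $i$-th entry is $(A^\T)_{p,q}$, which is $\big(e_i^{(n_1)}\big)^\T \utv(A^\T)$ (resp.\ $\big(e_i^{(n_2)}\big)^\T \sutv(A^\T)$). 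The only point requiring a little care — and the one I would spell out rather than wave at — is that the indexing bijection $i \mapsto (p,q)$ is literally the same one appearing in both $\vtu$ and $\utv$ (and likewise for $\vtsu$ and $\sutv$), which is immediate from the definitions since $\vtu$ and $\utv$ are mutually inverse reshaping maps built on the identical ordering of the upper triangle; there is no real obstacle, just the need to match conventions. Summation over a basis then extends all three identities to a general vector argument in place of $e_i$, which is how the lemma will actually be used in the proof of Theorem~\ref{thm::pHgradients}.
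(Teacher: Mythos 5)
Your proposal is correct and follows essentially the same route as the paper: identify $\vtf_m\big(e_i^{(nm)}\big)$ (resp.\ $\vtu$, $\vtsu$ of a basis vector) as a rank-one matrix $e_p^{(n)}\big(e_q^{(m)}\big)^\T$ via the index decomposition $i=(q-1)n+p$, collapse the trace to a single entry of $A^\T$ by cyclic invariance, and match it against the corresponding entry of the vectorization, checking that the two reshaping maps use the same ordering. The only cosmetic difference is that you evaluate $\tr\big(A e_p e_q^\T\big)$ directly as $e_q^\T A e_p$ rather than invoking cyclic permutation explicitly, which changes nothing of substance.
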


\begin{proof}
  Note that $\vtf_m\big(e_i^{(nm)}\big) \in \R^{n \times m}$ can be expressed as $e_k^{(n)}\big(e_\ell^{(m)}\big)^\T$, where $i = (\ell-1)n+k$. Since the trace is invariant under cyclic permutations\footnote{For arbitrary $X\in \C^{n \times m}, Y \in \C^{m \times p},$ and $Z\in \C^{p \times n}$ we have that $\tr(XYZ)=\tr(ZXY)=\tr(YZX)$.}, we have
  \begin{align*}
    \tr\left(A\vtu\big(e_i^{(nm)}\big)\right) = \tr\left(Ae_k^{(n)}\big(e_\ell^{(m)}\big)^\T\right) = \tr\left(\big(e_\ell^{(m)}\big)^\T A e_k^{(n)}\right) = a_{\ell,k}.
  \end{align*}
  On the other hand, it can be verified from the definition of $\ftv$ that  $\big(e_i^{(nm)}\big)^\T \ftv(A^\T) = a_{\ell,k}$.
  The equalities \eqref{eq::tracevtu} and \eqref{eq::tracevtsu} can be shown analogously by identifying the nonzero element of $\vtu\big(e_i^{(n_1)}\big)$ and $\vtsu\big(e_i^{(n_2)}\big)$, respectively, and using the cyclic permutation invariance of the trace.
\end{proof}

We proceed with the proof of Theorem~\ref{thm::pHgradients}.
\begin{proof}
  We show differentiability with respect to $\theta_J$ and accordingly the partial gradient \eqref{eq::dta}.
  Fix an $i\in \{1,\,\dots,\,n_x(n_x-1)/2\}$ and let $e_i$  be the $i$-th standard basis vector of $\R^{n_x(n_x-1)/2}$. Let $\dd\theta_i(\epsilon) := \begin{bmatrix} \epsilon e_i^\T,\,0 \end{bmatrix}^\T \in \R^{n_\theta}$ and $\Delta_i := \pJ(e_i)$. 
  
  We have 
  \begin{align*}
   \pHtf(s_0,\theta_0 + \dd\theta_i(\varepsilon)) &= B_0^\T Q_0(s_0 I_{n_x} - (J_0 + \varepsilon \Delta_i - R_0)Q_0)^{-1}B_0 \\
   &= -B_0^\T Q_0( \varepsilon \Delta_i Q_0 - \dyn)^{-1}B_0.
  \end{align*}
  Thus, $\varepsilon \mapsto \pHtf(s_0,\theta_0 + \dd\theta_i(\varepsilon))$ is a rational matrix-valued function that admits a Taylor series expansion at zero which, according to \cite{MehS05}, is
  \begin{equation}\label{eq:taylor}
    \pHtf(s_0,\theta_0 + \dd\theta_i(\varepsilon)) = \pHtf(s_0,\theta_0) + \varepsilon B_0^\T Q_0 \dyn^{-1} \Delta_i Q_0 \dyn^{-1}B_0 + \mathcal{O}(\varepsilon^2).
  \end{equation}
  Due to the nonzero and simplicity assumptions for the $j$-th singular value of $G(s_0) - \pHtf(s_0,\theta_0)$, by \cite{Lan64} the function $\varepsilon \mapsto \sigma_j\left(G(s_0) - \pHtf(s_0,\theta_0 + \dd\theta_i(\varepsilon))\right)$ is differentiable at zero. Together with \eqref{eq:taylor} we obtain
  \begin{align*}
    \frac{\dd}{\dd \varepsilon} \sigma_j\left(G(s_0) - \pHtf(s_0,\theta_0 + \dd\theta_i(\varepsilon))\right) \Big|_{\varepsilon = 0} &= -\Real\left(\uu^\H B_0^\T Q_0 \dyn^{-1} \Delta_i Q_0 \dyn^{-1}B_0 \vv\right) \\ &= -\Real\left(\tr\left(Q_0 \dyn^{-1}B_0 \vv\uu^\H B_0^\T Q_0 \dyn^{-1} \Delta_i\right)\right) \\ &= \Real(\tr(Y_1\Delta_i)).
  \end{align*}
  Since $\Delta_i = \vtsu(e_i)^\T - \vtsu(e_i)$ we have
  \begin{align*}
    \Real\left(\tr(\dJ \Delta_i)\right) &= \Real\left(\tr\left(\dJ (\vtsu(e_i)^\T-\vtsu(e_i))\right)\right) \\ 
                    &= \Real\left(\tr\left(\dJ \vtsu(e_i)^\T)-\tr(\dJ\vtsu(e_i)\right)\right) \\
                    &= \Real\left(\tr\left((\dJ^\T-\dJ) \vtsu(e_i)\right)\right) = (\dta)_i,
  \end{align*}
  with $\dta$ as in \eqref{eq::dta}.
  The last equality is due to Lemma~\ref{lem::trace}. Differentiability with respect to the other components of $\theta$ can be shown analogously as well as the formulas \eqref{eq::dtb}--\eqref{eq::dtd}, again by making use of Lemma~\ref{lem::trace}.
\end{proof}
 
\subsubsection{Parametrized Mechanical Systems}
Similarly to the previous parametrization of pH systems, we can proceed in the case of SSO systems.
\begin{lemma}
  Let $\theta \in \R^{n_\theta}$ be a parameter vector with $n_\theta = n_x\left(\frac{3n_x+3}{2}+n_u\right)$. 
  Furthermore, let $\theta$ be partitioned as $\theta:=\begin{bmatrix}\theta_M^\T,\,\theta_D^\T,\,\theta_K^\T,\,\theta_B^\T\end{bmatrix}^\T$ with $\theta_M,\,\theta_D,\,\theta_K\in\R^{n_x(n_x+1)/2}$ and $\theta_B\in\R^{n_x \cdot n_u}$. Further define the matrices
  \begin{subequations}
    \begin{align}
      \pM(\theta) &=\vtu(\theta_M)^\T \vtu(\theta_M),\\
      \pD(\theta) &=\vtu(\theta_D)^\T \vtu(\theta_D),\\
      \pK(\theta) &=\vtu(\theta_K)^\T \vtu(\theta_K),\\
      \pB(\theta) &=\vtf_{n_u}(\theta_B).
    \end{align}
  \end{subequations}
  Then, to each $\theta \in \R^{n_\theta}$ one can assign the second-order system
\begin{align} \label{eq:mechdelta}
  \mechSys(\theta) :\;
  \begin{cases}
    \pM(\theta) \ddot x(t) + \pD(\theta) \dot x(t) + \pK(\theta) x(t) = \pB(\theta) u(t),\\
    \hphantom{\pM(\theta) \ddot x(t) + \pD(\theta) \dot x(t) + \pK(\theta)} y(t) = \pB(\theta)^\T x(t).
  \end{cases}
\end{align}
 Conversely, to each SSO system $\mechSys$ with $\nx$ states and $n_u$ inputs and outputs one can assign a vector $\theta \in \R^{n_\theta}$ such that $\mechSys = \mechSys(\theta)$ with $\mechSys(\theta)$ as in \eqref{eq:mechdelta}.
 \label{lem:sndparam}
\end{lemma}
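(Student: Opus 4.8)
The plan is to establish the two claims of the lemma separately, in complete analogy with the parametrization lemma for pH systems stated earlier in this section. Both reduce to an elementary dimension count for the reshaping maps together with the fact that the cone of symmetric positive semidefinite matrices is exactly the set of Gram matrices $L^\T L$, and that one may take the factor $L$ to be upper triangular.

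For the direction from $\theta$ to $\mechSys(\theta)$, I would first record that since $\theta_M,\theta_D,\theta_K \in \R^{n_x(n_x+1)/2}$ and $\theta_B \in \R^{n_x n_u}$, the total parameter length is
\[
  n_\theta \;=\; 3\cdot\frac{n_x(n_x+1)}{2} + n_x n_u \;=\; n_x\!\left(\frac{3n_x+3}{2} + n_u\right),
\]
and each reshaping map $\vtu$, $\vtf_{n_u}$ is applied to a vector of exactly the length on which it is defined, so $\pM(\theta)$, $\pD(\theta)$, $\pK(\theta)$, $\pB(\theta)$ are well-defined. Next, for any $L \in \R^{n_x\times n_x}$ the matrix $L^\T L$ is symmetric and satisfies $v^\T L^\T L v = (Lv)^\T(Lv) \ge 0$ for all $v$; applying this with $L = \vtu(\theta_M)$, $L = \vtu(\theta_D)$, $L = \vtu(\theta_K)$ shows $\pM(\theta) = \pM(\theta)^\T \succeq 0$, $\pD(\theta) = \pD(\theta)^\T \succeq 0$, and $\pK(\theta) = \pK(\theta)^\T \succeq 0$. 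Hence $\mechSys(\theta)$ in \eqref{eq:mechdelta} meets every structural requirement of an SSO system, and the assignment $\theta \mapsto \mechSys(\theta)$ is well-defined.

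For the converse, let $\mechSys$ be an SSO system with $M,D,K \succeq 0$ and $B \in \R^{n_x\times n_u}$. The key step is that every symmetric positive semidefinite $X \in \R^{n_x\times n_x}$ admits a factorization $X = R^\T R$ with $R \in \R^{n_x\times n_x}$ upper triangular: writing $X = Y^\T Y$ with $Y = X^{1/2}$ the symmetric positive semidefinite square root and taking a QR decomposition $Y = QR$ (with $Q$ orthogonal and $R$ upper triangular) gives $X = R^\T Q^\T Q R = R^\T R$, where no invertibility of $X$ is needed. Applying this to $M$, $D$, $K$ produces upper triangular $R_M$, $R_D$, $R_K$ with $M = R_M^\T R_M$, $D = R_D^\T R_D$, $K = R_K^\T R_K$. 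I would then set $\theta_M := \utv(R_M)$, $\theta_D := \utv(R_D)$, $\theta_K := \utv(R_K)$, $\theta_B := \ftv(B)$, and $\theta := \begin{bmatrix}\theta_M^\T,\,\theta_D^\T,\,\theta_K^\T,\,\theta_B^\T\end{bmatrix}^\T$. Since $\vtu$ reconstructs an upper triangular matrix from its $\utv$-image and $\vtf_{n_u}$ is the inverse of $\ftv$, we get $\vtu(\theta_M) = R_M$, hence $\pM(\theta) = R_M^\T R_M = M$, and likewise $\pD(\theta) = D$, $\pK(\theta) = K$, $\pB(\theta) = B$; therefore $\mechSys(\theta) = \mechSys$. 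The only step that is not pure bookkeeping --- and the main (mild) obstacle --- is justifying the upper-triangular Gram factor in the rank-deficient case, which is precisely why I would route through the symmetric square root and a QR factorization rather than the naive Cholesky recursion; this is also the sole substantive difference from the pH argument, where additionally the skew-symmetric part $J$ must be written as $\vtsu(\theta_J)^\T - \vtsu(\theta_J)$, which is immediate by taking $\vtsu(\theta_J)$ to be the negative of the strictly upper triangular part of $J$.
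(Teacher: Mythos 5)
Your proposal is correct, and it follows the argument the paper intends: the paper states this lemma without proof (both here and for the pH analogue it is presented as a direct consequence of the parametrization), and your reasoning --- $L^\T L$ is always symmetric positive semidefinite, and conversely every symmetric positive semidefinite matrix admits an upper-triangular Gram factor so that $\utv$ and $\ftv$ recover a valid $\theta$ --- is exactly the omitted content. The one place where you add genuine value is the routing through the symmetric square root and a QR factorization to obtain $X=R^\T R$ with $R$ upper triangular in the rank-deficient case, which correctly avoids relying on a classical Cholesky factorization that the paper elsewhere only invokes under a positive definiteness assumption.
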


Next, we formulate the analogue of Theorem~\ref{thm::pHgradients}. Again we provide the gradients of the norm of the difference between a given transfer function $G \in \mathcal{RH}_\infty^{n_u \times n_u}$ (of a possibly large-scale system) and the transfer function of the SSO system \eqref{eq:mechdelta}, given by
\begin{equation} \label{eq:mechtf}
\mechtf(s,\theta)=\pB(\theta)^\T(s^2 \pM(\theta) + s \pD(\theta)+ \pK(\theta))^{-1}\pB(\theta),
\end{equation}
evaluated at some given value $s_0 \in \overline{\C^+}$ with respect to the parameter vector~$\theta$.

\begin{theorem}\label{thm:mechGradients}
  Let $\theta_0 \in \R^{n_\theta}$ be given and assume that $G \in \mathcal{RH}_\infty^{n_u \times n_u}$ and $\mechtf(\cdot,\theta_0) \in \rhinf^{n_u \times n_u}$ as in \eqref{eq:mechtf}. Suppose further that for given $s_0 \in \overline{\C^+}$, the $j$-th singular value of $G(s_0)-\mechtf(s_0,\theta_0)$ is nonzero and simple and let $\uu \in \C^{n_u}$ and $\vv \in \C^{n_u}$ be the corresponding left and right singular vectors, respectively.

 Then the function $\theta \mapsto \sigma_j(G(s_0)-\mechtf(s_0,\theta))$ is differentiable in a neighborhood of $\theta_0$. Define the short-hand notations $M_0:=M(\theta_0),\,D_0:=D(\theta_0),\,K_0:=K(\theta_0),\,B_0:=B(\theta_0)$, and assume further that
 \begin{align*}
    \dynmech :=s_0^2 M_0 + s_0D_0 + K_0
  \end{align*}
  is invertible. Then with
  \begin{align*}
    Y := \dynmech^{-1}B_0\vv \uu^{\H} B_0^\T \dynmech^{-1},
  \end{align*}
  we obtain
  \begin{align*}
    \nabla_\theta \sigma_j(G(s_0)-\mechtf(s_0,\theta_0)) := \begin{bmatrix}
    \dtM^\T,\, \dtD^\T,\, \dtK^\T, \, \dtB^\T \end{bmatrix}^\T,
  \end{align*}
  where
  \begin{subequations}
    \begin{align}
      \dtM &= \Real\left(s_0^2 \cdot \utv(\vtu(\theta_M) Y+\vtu(\theta_M) Y^\T)\right),\label{eq::dtmech}\\
      \dtD &= \Real\left(s_0 \cdot \utv(\vtu(\theta_D) Y+\vtu(\theta_D) Y^\T)\right),\\
      \dtK &= \Real\left(\utv(\vtu(\theta_K) Y+\vtu(\theta_K) Y^\T)\right),\\
      \dtB &= \Real\left(\ftv(\dynmech^{-1}B_0 (\vv \uu^\H+ \uu \vv^\H))\right).
    \end{align}
  \end{subequations}
\end{theorem}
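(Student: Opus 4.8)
The plan is to mirror the proof of Theorem~\ref{thm::pHgradients} essentially verbatim, replacing the pH transfer function $\pHtf$ by the SSO transfer function $\mechtf$ and the resolvent $(s_0 I - (J-R)Q)^{-1}$ by $\dynmech^{-1} = (s_0^2 M_0 + s_0 D_0 + K_0)^{-1}$. First I would fix an index $i$ in the range corresponding to one of the parameter blocks, say $\theta_M$, let $e_i$ be the $i$-th standard basis vector of $\R^{n_x(n_x+1)/2}$, and set the perturbation $\dd\theta_i(\epsilon) := [\,0,\,\epsilon e_i^\T,\,0\,]^\T$ (with the nonzero block in the $\theta_M$ slot). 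The key observation is that the map $\theta_M \mapsto M(\theta) = \vtu(\theta_M)^\T \vtu(\theta_M)$ is quadratic, so its directional derivative at $\theta_{M,0}$ in direction $e_i$ is $\Delta_i := \vtu(\theta_{M,0})^\T \vtu(e_i) + \vtu(e_i)^\T \vtu(\theta_{M,0})$, a symmetric matrix; similarly for $\theta_D$, $\theta_K$, while for $\theta_B$ the map is linear and the derivative is simply $\vtf_{n_u}(e_i)$.

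Next I would expand $\mechtf(s_0, \theta_0 + \dd\theta_i(\epsilon))$. Perturbing $M_0$ by $\epsilon \Delta_i$ changes $\dynmech$ to $\dynmech + \epsilon s_0^2 \Delta_i$, and a standard resolvent (Neumann-type) expansion — the same one used via \cite{MehS05} in the pH proof — gives
\begin{align*}
  \mechtf(s_0, \theta_0 + \dd\theta_i(\epsilon)) = \mechtf(s_0,\theta_0) - \epsilon s_0^2 \, B_0^\T \dynmech^{-1} \Delta_i \dynmech^{-1} B_0 + \mathcal{O}(\epsilon^2).
\end{align*}
Since the $j$-th singular value of $G(s_0) - \mechtf(s_0,\theta_0)$ is nonzero and simple, the perturbation result of \cite{Lan64} yields that $\epsilon \mapsto \sigma_j(G(s_0) - \mechtf(s_0, \theta_0 + \dd\theta_i(\epsilon)))$ is differentiable at zero with derivative $\Real(\uu^\H [\,\text{first-order coefficient}\,] \vv)$. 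Plugging in and using the cyclic invariance of the trace, this derivative equals $\Real(\tr(s_0^2 \dynmech^{-1} B_0 \vv \uu^\H B_0^\T \dynmech^{-1} \Delta_i)) = \Real(s_0^2 \tr(Y \Delta_i))$ with $Y$ as defined in the statement.

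Finally I would convert $\Real(s_0^2 \tr(Y\Delta_i))$ into the claimed closed form. Writing $\Delta_i = \vtu(\theta_M)^\T \vtu(e_i) + \vtu(e_i)^\T \vtu(\theta_M)$ and using linearity and cyclicity of the trace,
\begin{align*}
  \tr(Y \Delta_i) = \tr\big((\vtu(\theta_M) Y + Y^\T \vtu(\theta_M)^\T)^\T \vtu(e_i)\big) = \tr\big((\vtu(\theta_M) Y + \vtu(\theta_M) Y^\T) \, \vtu(e_i)\big),
\end{align*}
where in the last step I use that for any matrix $Z$, $\tr(Z^\T \vtu(e_i))$ picks out the $\vtu$-support entries of $Z$ symmetrically so that $Y^\T\vtu(\theta_M)^\T$ may be replaced by $\vtu(\theta_M)Y^\T$ under the trace against $\vtu(e_i)$ — more carefully, one applies \eqref{eq::tracevtu} of Lemma~\ref{lem::trace} to each term separately and recombines. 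Then \eqref{eq::tracevtu} gives $\tr((\vtu(\theta_M)Y + \vtu(\theta_M)Y^\T)\vtu(e_i)) = (e_i)^\T \utv((\vtu(\theta_M)Y + \vtu(\theta_M)Y^\T)^\T)$, and assembling over all $i$ and taking real parts yields \eqref{eq::dtmech}. The computations for $\dtD$ and $\dtK$ are identical up to the scalar factors $s_0$ and $1$ coming from $s\dynmech$ and the $K$-term, and $\dtB$ follows from the linear (rather than quadratic) dependence: perturbing $B_0$ by $\epsilon\vtf_{n_u}(e_i)$ produces first-order term $\uu^\H(\vtf_{n_u}(e_i)^\T \dynmech^{-1} B_0 + B_0^\T\dynmech^{-1}\vtf_{n_u}(e_i))\vv$, which via \eqref{eq::tracevtf} gives \eqref{eq::dtmech}'s analogue for $\dtB$. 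The main obstacle — really the only subtle point — is bookkeeping the symmetrization correctly when passing from $\tr(Y\Delta_i)$ to the $\utv(\cdot + \cdot^\T)$ form: one must verify that $\tr(Y\vtu(\theta_M)^\T\vtu(e_i))$ and $\tr(Y\vtu(e_i)^\T\vtu(\theta_M))$ both reduce, via Lemma~\ref{lem::trace} and cyclicity, to entries of $\vtu(\theta_M)Y$ and $\vtu(\theta_M)Y^\T$ respectively contracted against $\vtu(e_i)$, so that the two contributions combine into the single symmetric expression; everything else is a direct transcription of the pH argument.
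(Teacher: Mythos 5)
Your overall strategy is exactly the paper's: perturb one parameter block at a time, Taylor-expand $M(\theta_0+\dd\theta_i(\epsilon))$ to obtain the symmetric direction $\Delta_i=\vtu(\theta_M)^\T\vtu(e_i)+\vtu(e_i)^\T\vtu(\theta_M)$, expand the resolvent to get the first-order term $-\epsilon s_0^2\, B_0^\T\dynmech^{-1}\Delta_i\dynmech^{-1}B_0$, invoke the derivative formula for a simple nonzero singular value via \cite{Lan64}, and reduce $\Real\big(s_0^2\tr(Y\Delta_i)\big)$ with Lemma~\ref{lem::trace}. All of that coincides with the paper's proof. However, the one computation you carry out explicitly --- the symmetrization bookkeeping that you yourself single out as the only subtle point --- is incorrect as written.

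Concretely, write $U=\vtu(\theta_M)$ and recall that $\vtu(e_i)$ has a single entry equal to one, at the upper-triangular position $(k,\ell)$ with $k\le\ell$ associated with the index $i$. A direct evaluation gives
\begin{equation*}
\tr\big(YU^\T\vtu(e_i)\big)=\big(UY^\T\big)_{k\ell},\qquad \tr\big(Y\vtu(e_i)^\T U\big)=\big(UY\big)_{k\ell},
\end{equation*}
so that $\tr(Y\Delta_i)=\big(U(Y+Y^\T)\big)_{k\ell}$, which is precisely the $i$-th entry of $\utv\big(\vtu(\theta_M)Y+\vtu(\theta_M)Y^\T\big)$ as claimed in \eqref{eq::dtmech}. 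Your first displayed identity $\tr(Y\Delta_i)=\tr\big((UY+Y^\T U^\T)^\T\vtu(e_i)\big)$ instead evaluates to $(UY)_{\ell k}+(UY)_{k\ell}$ and is therefore false in general, and your final expression $e_i^\T\utv\big((\vtu(\theta_M)Y+\vtu(\theta_M)Y^\T)^\T\big)$ equals $\big(U(Y+Y^\T)\big)_{\ell k}$, i.e.\ the $(\ell,k)$ rather than the $(k,\ell)$ entry. Since $U(Y+Y^\T)$ is not symmetric in general, $\utv(A)$ and $\utv(A^\T)$ are different vectors, so the formula your chain of equalities actually produces is not \eqref{eq::dtmech}. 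The repair is the route you allude to but do not execute: apply \eqref{eq::tracevtu} to the two terms of $\tr(Y\Delta_i)$ separately, namely $\tr\big(YU^\T\vtu(e_i)\big)=e_i^\T\utv(UY^\T)$ and $\tr\big(Y\vtu(e_i)^\T U\big)=\tr\big(UY\vtu(e_i)^\T\big)=e_i^\T\utv(UY)$, and add; this is exactly what the paper does. The remaining parts of your argument (the $\theta_D$, $\theta_K$, $\theta_B$ blocks, the Taylor expansions, the differentiability statement) are fine and match the paper's proof.
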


\begin{proof}
  The proof is analogous to the proof of Theorem~\ref{thm::pHgradients}. We again show differentiability with respect to $\theta_M$ and the corresponding partial gradient \eqref{eq::dtmech}. Fix $i \in \{ 1, \dots, n_x(n_x+1)/2 \}$ and let $\dd \theta_i(\epsilon):= \begin{bmatrix}
    \epsilon e_i^\T & 0
  \end{bmatrix}^T \in \R^{n_\theta}$ where $e_i$ is the $i$-th standard unit vector in $\R^{n_x(n_x+1)/2}$. Note that $\epsilon  \mapsto M\left(\theta_0+\dd \theta_i(\epsilon)\right)$ admits a Taylor series expansion at 0. It is given by
  \begin{align*}
    M(\theta_0 + \dd \theta_i(\epsilon)) &= \vtu\left(\theta_0+\dd \theta_i(\epsilon) \right)^\T \vtu \left( \theta_0+ \dd \theta_i(\epsilon)\right) 
                                         = M_0 + \epsilon\Delta_i + \mathcal{O}(\epsilon^2),
  \end{align*}
  where $\Delta_i= \vtu(\theta_M)^\T \vtu(e_i) + \vtu(e_i)^\T \vtu(\theta_M)$. Therefore, the Taylor series expansion of $\epsilon  \mapsto \mechtf(s_0, \theta_0+\dd \theta_i(\epsilon))$ at zero is given by
  \begin{align}
    \mechtf(s_0,\theta_0+\dd \theta_i(\epsilon))&=B_0^\T\left(\epsilon s_0^2\Delta_i + \dyn +\mathcal{O}(\epsilon^2)\right)^{-1}B_0 \nonumber \\
    &=\mechtf(s_0, \theta_0) - \epsilon s_0^2 B_0^\T \dyn^{-1} \Delta_i \dyn^{-1} B_0 + \mathcal{O}(\epsilon^2). \label{eq:taylormech}
  \end{align}
  For the derivative of $\epsilon \mapsto \sigma_j(G(s_0)-\mechtf(s_0, \theta_0+\dd \theta_i(\epsilon)))$ at zero, from \eqref{eq:taylormech}, we obtain
  \begin{align*}
    \frac{\dd}{\dd \epsilon} \sigma_j( G(s_0)-\mechtf(s_0, \theta_0+\dd \theta_i(\epsilon))) \Big|_{\epsilon=0} &= \Real (s_0^2 \uu^\H B_0^\T \dyn^{-1} \Delta_i \dyn^{-1} B_0 \vv) \\
                                                                                                            &= \Real \big(s_0^2 \tr (\dyn^{-1}B_0 \vv \uu^\H B_0^\T \dyn^{-1} \Delta_i)\big) \\
                                                                                                            &= \Real \big(s_0^2 \tr (Y \Delta_i)\big).
  \end{align*}
  Since $\Delta_i = \vtu(\theta_M)^\T \vtu(e_i) + \vtu(e_i)^\T \vtu(\theta_M)$, we have that
  \begin{align*}
    \Real\left(s_0^2 \tr (Y \Delta_i)\right) &= \Real\left(s_0^2 \tr(Y \vtu(\theta_M)^\T \vtu(e_i))+ \tr(Y \vtu(e_i)^\T \vtu(\theta_M))\right) \\
                                             &= \Real \left( s_0^2 (e_i^\T \utv(\vtu(\theta_M) Y^\T) + e_i^\T \utv(\vtu(\theta_M)Y)) \right) = (\dtM)_i.
  \end{align*}
  The other parts of the gradient can be shown analogously.
\end{proof}

\subsection{Optimization of the ROM Parameters}

In the following, we will explain how we optimize the parameters of the ROM to minimize the distance of its transfer function to the transfer function of a given model with respect to the $\hinf$ norm. Throughout the following section, we denote the transfer function of a given (large-scale) model by $G$ and some parametrized transfer function approximation by $G_r(\cdot, \theta)$. We assume that $G \in \mathcal{RH}_\infty^{n_u \times n_u}$ and will construct a sequence $(\theta_i)_{i \in \mathbb{N}}$ such that $G_r(\cdot,\theta_i) \in \rhinf^{n_u \times n_u}$ for $i=0,\,1,\,2,\,\ldots$ with the goal to achieve smaller errors for increasing values of $i$. This sequence should converge to a final value $\theta_{\rm fin} \in \R^{n_\theta}$ leading to a small $\hinf$ error.

We will first explain the problems associated with directly minimizing ${\|G- G_r(\cdot,\theta)\|}_{\hinf}$ and after that detail our approach.

\subsubsection{Problems with the Direct $\hinf$ Error Minimization Approach}

Methods for computing the $\hinf$ norm of rational transfer functions exist since the early 1990s \cite{Boyd1990, Bruinsma1990}. However, these are typically not used for computing the $\hinf$ norm of $G- G_r( \cdot, \theta)$, since they require multiple solutions of eigenvalue problems of size around twice the state-space dimension of this error transfer function which has a slightly larger state-space dimension than the state-space realization of $\Sigma$ of $G$ and do not exploit the sparsity of the system matrices. Since in the MOR context, $\Sigma$ is assumed to be large and sparse, this is computationally prohibitive.

Recent developments have led to faster $\hinf$ norm computation methods for transfer functions of systems with a large state-space dimension \cite{Guglielmi2013, Voigt2015, Freitag2014, AliBMSV17a, SchV18, MitchOv2015}. However, with these methods, typically only a local maximizer of $\omega \mapsto {\|G(\mathrm{i}\omega)\|}_2$ for $\omega \in \R$ is computed. This problem is partly addressed in \cite{SchMV20}, in which a global certificate for the $\hinf$ norm computation for large-scale systems is established and which can exploit sparsity in the systems matrices. However, this is still computationally expensive compared to the $\hinf$ norm computation in, e.\,g., \cite{AliBMSV17a}.

These fast $\hinf$ norm computation methods in principle allow a direct optimization of the parameters of $G_r( \cdot, \theta)$ to minimize its distance to a given $G$ with respect to the $\hinf$ norm. However, the problem
\begin{align}
  \label{eq:hinferror18}
  \min_{\theta \in \R^{n_\theta}} \left\|G-G_r( \cdot, \theta)\right\|_\hinf
\end{align}
is a nonlinear, nonsmooth, and nonconvex optimization problem. 
While there exist some methods that solve nonsmooth optimization problems, e.\,g., \cite{BurLO05,Rus06,CurO12,Curtis2017}, these tend to converge only slowly requiring a high number of $\hinf$ norm evaluations. Thus, certifying all computed $\hinf$ norms is computationally prohibitive such that the computation of the correct $\hinf$ norm at every iterate cannot be guaranteed.
Therefore, we propose an alternative optimization approach in the next section.
In Section~\ref{sec:NumExp} we compare our alternative method with the direct $\hinf$ error minimization approach. For computing the $\hinf$ norm in this approach we use both the large-scale method in \cite{SchV18} (\texttt{linorm\_subsp}), which can exploit sparsity, and the method in \cite{BenSV12} (\texttt{ab13hd}), which uses full and dense matrices. For the direct $\hinf$ error minimization we use \texttt{GRANSO}\footnote{available at \url{https://gitlab.com/timmitchell/GRANSO/}}, which is presented in \cite{Curtis2017}.


\subsubsection{Leveled Least Squares}
Instead of minimizing $\|G-G_r( \cdot, \theta)\|_{\hinf}$ directly, we propose to minimize an alternative objective functional. Let $\gamma > 0$ be a parameter and $S := \{ s_1,\,\ldots ,\, s_k\} \subset \ri \R$ be a set of sample points. We propose to minimize
\begin{align}
  \loss(\gamma,G,G_r( \cdot,\theta),S) := \frac{1}{\gamma}\sum\limits_{s_i\in S}
  \left(\sum\limits_{j=1}^{n_u}\left(\left[\sigma_j \left(G(s_i)-G_r(s_i,\theta)\right)-\gamma\right]_+\right)^2\right)
  \label{eq:loss}
\end{align}
with respect to $\theta$, where 
\begin{align*}
  [ \cdot ]_+:  \R \rightarrow \overline{\R^+}, \quad x \mapsto 
  \begin{cases}
    x & \text{if } x\ge 0,\\
    0 & \text{if } x<0
  \end{cases}
\end{align*}
for decreasing values of $\gamma > 0$. 

In the following proposition, we summarize the characteristics of $\loss$ that motivate its utilization in $\hinf$ error minimization.
\begin{proposition}[Properties of $\loss$] Let $S =\{s_1,\,\ldots,\,s_k \} \subset \mathrm{i}\R$ and $\gamma > 0$ be fixed and let $L$ be given as in \eqref{eq:loss}. For $i=1,\,\ldots,\,k$ and $j = 1,\,\ldots,\,\enu$ define
\begin{equation*}
  f_{ij}(\theta) := \sigma_j(G(s_i)-G_r(s_i,\theta))
\end{equation*}
and $g(\theta) := \loss(\gamma,G,G_r( \cdot,\theta),S)$. Then the following statements are satisfied:
  \begin{enumerate}[i)]
    \item We have that $\loss(\gamma,G,G_r( \cdot, \theta),S)=0$ for all $\gamma>{\|G-G_r( \cdot, \theta)\|}_{\hinf}$.
    \item 
    The function $g(\cdot)$ is differentiable at $\theta_0$. Moreover, the partial derivatives of $g(\cdot)$ at $\theta_0$ are given by
    \begin{align}\label{eq:diffg}
     \begin{split}
      \frac{\partial}{\partial \theta_{\ell}}g(\theta_0) &= \frac{2}{\gamma} \sum_{f_{ij}(\theta_0) > \gamma} (f_{ij}(\theta_0) -\gamma) \frac{\partial_+}{\partial \theta_\ell} f_{ij}(\theta_0) \\ &= \frac{2}{\gamma} \sum_{f_{ij}(\theta_0) > \gamma} (f_{ij}(\theta_0) -\gamma) \frac{\partial_-}{\partial \theta_\ell} f_{ij}(\theta_0), \quad \ell=1,\,\ldots,\,n_\theta,
      \end{split}
    \end{align}
    where $\frac{\partial_+}{\partial \theta_\ell}$ and $\frac{\partial_-}{\partial \theta_\ell}$ denote the \emph{right} and \emph{left} partial derivative with respect to $\theta_\ell$.
  \end{enumerate}
  \label{prop:lossprops}
\end{proposition}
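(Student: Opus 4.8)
The plan is to prove the two claims separately; claim~i) is immediate from the definition of the $\hinf$ norm, whereas claim~ii) hinges on the observation that $\phi(x):=\big([x-\gamma]_+\big)^2$ is continuously differentiable on $\R$ with derivative $\phi'(x)=2[x-\gamma]_+$ and, in particular, $\phi'(\gamma)=0$. For claim~i) I would use that, since $G\in\rhinf^{n_u\times n_u}$ and $G_r(\cdot,\theta)\in\rhinf^{n_u\times n_u}$ (as is assumed throughout this section), also $G-G_r(\cdot,\theta)\in\rhinf^{n_u\times n_u}$, and therefore ${\|G-G_r(\cdot,\theta)\|}_{\hinf}=\sup_{\omega\in\R}\sigma_1\big(G(\ri\omega)-G_r(\ri\omega,\theta)\big)$. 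Hence, for every sample point $s_i\in S\subset\ri\R$ and every index $j$, we get $f_{ij}(\theta)=\sigma_j\big(G(s_i)-G_r(s_i,\theta)\big)\le\sigma_1\big(G(s_i)-G_r(s_i,\theta)\big)\le{\|G-G_r(\cdot,\theta)\|}_{\hinf}<\gamma$ as soon as $\gamma>{\|G-G_r(\cdot,\theta)\|}_{\hinf}$, so every summand $\big([f_{ij}(\theta)-\gamma]_+\big)^2$ in \eqref{eq:loss} vanishes and $\loss(\gamma,G,G_r(\cdot,\theta),S)=0$.

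For claim~ii) I would write $g=\tfrac1\gamma\sum_{i,j}\psi_{ij}$ with $\psi_{ij}:=\phi\circ f_{ij}$ and first note that each $f_{ij}$ is locally Lipschitz around $\theta_0$: the map $A\mapsto\sigma_j(A)$ is $1$-Lipschitz and $\theta\mapsto G_r(s_i,\theta)$ is a smooth (rational) map wherever defined, so $|f_{ij}(\theta)-f_{ij}(\theta_0)|\le L{\|\theta-\theta_0\|}$ on a neighbourhood of $\theta_0$; consequently $g$ is locally Lipschitz as well. The contribution of each pair $(i,j)$ is then treated by distinguishing three cases according to the value $f_{ij}(\theta_0)$. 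If $f_{ij}(\theta_0)<\gamma$, continuity forces $f_{ij}<\gamma$ near $\theta_0$, so $\psi_{ij}$ vanishes identically there and contributes nothing to $g$ or its derivatives. If $f_{ij}(\theta_0)=\gamma$, then $0\le\psi_{ij}(\theta)=\big([f_{ij}(\theta)-f_{ij}(\theta_0)]_+\big)^2\le L^2{\|\theta-\theta_0\|}^2$, so $\psi_{ij}$ is differentiable at $\theta_0$ with vanishing gradient --- this is precisely where $\phi'(\gamma)=0$ enters. If $f_{ij}(\theta_0)>\gamma$, continuity gives $\psi_{ij}=(f_{ij}-\gamma)^2$ near $\theta_0$, and, the associated singular value being simple and nonzero, $f_{ij}$ is differentiable at $\theta_0$ (the situation covered by Theorems~\ref{thm::pHgradients} and~\ref{thm:mechGradients}), whence the chain rule yields $\nabla\psi_{ij}(\theta_0)=\phi'\big(f_{ij}(\theta_0)\big)\nabla f_{ij}(\theta_0)=2\big(f_{ij}(\theta_0)-\gamma\big)\nabla f_{ij}(\theta_0)$. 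Summing the three types of contributions shows that $g$ is differentiable at $\theta_0$ and gives the first equality in \eqref{eq:diffg}; since every contributing $f_{ij}$ is genuinely differentiable, its right and left partial derivatives coincide, which gives the second equality.

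The delicate step is the case $f_{ij}(\theta_0)>\gamma$, in which everything is reduced to differentiability of the singular value $f_{ij}$ itself --- and this can fail if $\sigma_j\big(G(s_i)-G_r(s_i,\theta_0)\big)$ is not simple. I would handle this either under the standing assumption (in line with Theorems~\ref{thm::pHgradients} and~\ref{thm:mechGradients}) that all singular values with $f_{ij}(\theta_0)\ge\gamma$ are simple, or, more generally, by grouping coincident singular values exceeding $\gamma$: for a block of indices $j$ sharing a common value $c>\gamma$, with $V$ collecting the associated orthonormal right singular vectors of $M_i(\theta_0):=G(s_i)-G_r(s_i,\theta_0)$, first-order Hermitian eigenvalue perturbation applied to $M_i(\theta)^\H M_i(\theta)$ shows that the sum over the block of the (one-sided) directional derivatives of $f_{ij}$ in a direction $v$ equals $\tfrac1c\Real\tr\big(V^\H M_i(\theta_0)^\H\,\partial_v M_i(\theta_0)\,V\big)$, which is linear in $v$. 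Since $g$ is locally Lipschitz, this linearity of the G\^ateaux derivative implies Fr\'echet differentiability at $\theta_0$, and it simultaneously shows that the right- and left-sided versions of the sum in \eqref{eq:diffg} agree. Reconciling the possibly non-differentiable individual singular values with the differentiability of their leveled, squared superposition is the part I expect to require the most care.
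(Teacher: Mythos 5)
Your proof is correct and follows the same overall case decomposition as the paper's (splitting the indices according to whether $f_{ij}(\theta_0)$ lies below, at, or above the level $\gamma$), but you justify the two delicate steps differently. For the kink case $f_{ij}(\theta_0)=\gamma$, the paper computes one-sided partial derivatives via the chain rule, observes that they vanish, and then argues via continuity of the partials; you instead use local Lipschitzness of $f_{ij}$ to bound the summand by $L^2\|\theta-\theta_0\|^2$, which yields Fr\'echet differentiability with zero gradient at $\theta_0$ in one line --- a cleaner and more self-contained argument. For the case $f_{ij}(\theta_0)>\gamma$ with a repeated singular value, the paper invokes semi-differentiability from \cite{Del19} together with Remark~\ref{rem:svals} (coincident singular-value branches ``add up smoothly''), the latter being asserted rather than proved; you make this precise by first-order perturbation of the clustered eigenvalues of $M_i(\theta)^\H M_i(\theta)$, obtaining a trace formula for the blockwise sum of one-sided directional derivatives that is linear in the direction, and then upgrading the linear G\^ateaux derivative to a Fr\'echet derivative via the equi-Lipschitz difference quotients (pointwise convergence of an equi-Lipschitz family on the unit sphere is uniform). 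That upgrade is valid in finite dimensions and is exactly what is needed to close the gap left by the remark, and the common weight $f_{ij}(\theta_0)-\gamma$ within a block lets you pass from the blockwise sum to the weighted sums in \eqref{eq:diffg}, including the equality of the left and right versions. So your route is not only correct but somewhat more complete; the price is the extra perturbation-theoretic machinery, which the paper avoids by citation.
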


\begin{remark}\label{rem:svals}
 If the $j$-th singular value $\sigma_0 := \sigma_j(G(s_i)-G_r(s_i,\theta_0)) > 0$ is not simple, then $f_{ij}(\cdot)$ is in general not differentiable in $\theta_0$. However, it is \emph{(G\^ateaux) semi-differentiable} in $\theta_0$, i.\,e., the limit
 \begin{equation*}
  \lim_{\varepsilon \searrow 0} \frac{f_{ij}\big(\theta_0 + \varepsilon\hat{\theta}\big) - f_{ij}(\theta_0)}{\varepsilon}
 \end{equation*}
 exists for all directions $\hat{\theta} \in \R^{n_\theta}$.
 On the other hand, the function 
 $\sum_{\{ j \; | \; f_{ij}(\theta_0) = \sigma_0 \}} f_{ij}(\cdot)$ 
 is still differentiable in $\theta_0$ -- in other words, the different branches of singular value curves that intersect each other at $\theta_0$ still add up smoothly. 
 
 If all singular values of $G(s_i)-G_r(s_i,\theta_0)$ greater than $\gamma$ are simple, then in \eqref{eq:diffg} all semi-derivatives can be replaced by derivatives which can be computed by either Theorem~\ref{thm::pHgradients} or Theorem~\ref{thm:mechGradients}.
\end{remark}

We continue with the proof of Proposition~\ref{prop:lossprops}.
\begin{proof}
  Assertion i) follows directly from the definition of $L$. 
  
  It remains to show ii): For $i=1,\,\ldots,\,k$ define the index sets
  \begin{align*}
   \mathcal{I}_i^+ &:= \{ j \in \{1,\,\ldots,\,\enu\} \; | \; f_{ij}(\theta_0) > \gamma \}, \\
   \mathcal{I}_i^0 &:= \{ j \in \{1,\,\ldots,\,\enu\} \; | \; f_{ij}(\theta_0) = \gamma \}, \\
   \mathcal{I}_i^- &:= \{ j \in \{1,\,\ldots,\,\enu\} \; | \; f_{ij}(\theta_0) < \gamma \},
  \end{align*}
  as well as 
  \begin{equation*}
  g_i^+(\theta) := \sum_{j \in \cI_i^+} \big([f_{ij}(\theta)-\gamma]_+\big)^2, \quad g_i^0(\theta) := \sum_{j \in \cI_i^0} \big([f_{ij}(\theta)-\gamma]_+\big)^2, \quad g_i^-(\theta) := \sum_{j \in \cI_i^-} \big([f_{ij}(\theta)-\gamma]_+\big)^2.
  \end{equation*}
  
  \emph{Case 1:} Clearly, if $f_{ij}(\theta_0) < \gamma$, then by continuity of $f_{ij}(\cdot)$, $f_{ij}(\theta) < \gamma$ for all $\theta$ in some neighborhood $\cN(\theta_0)$ of $\theta_0$. Thus, $g_i^-(\cdot)$ is differentiable in $\cN(\theta_0)$ and the gradient is zero.
  
  \emph{Case 2:} If $f_{ij}(\theta_0) > \gamma$, then $f_{ij}(\theta) > \gamma$ and $([f_{ij}(\theta) - \gamma]_+)^2 = (f_{ij}(\theta)-\gamma)^2$ for all $\theta$ in a neighborhood $\cN(\theta_0)$ of $\theta_0$. Hence, by \cite[Thm.~3.5]{Del19}, $([f_{ij}(\cdot)-\gamma]_+)^2$ is semi-differentiable in $\cN(\theta_0)$.
  Then for $\theta \in \cN(\theta_0)$ the chain rule implies
 \begin{align*}
   \frac{\partial_+}{\partial \theta_\ell}([f_{ij}(\theta)-\gamma]_+)^2 &= 2 (f_{ij}(\theta)-\gamma) \frac{\partial_+}{\partial \theta_\ell}f_{ij}(\theta), \\ 
   \frac{\partial_-}{\partial \theta_\ell}([f_{ij}(\theta)-\gamma]_+)^2 &= 2 (f_{ij}(\theta)-\gamma) \frac{\partial_-}{\partial \theta_\ell}f_{ij}(\theta), \quad \ell=1,\,\ldots,\,n_\theta,
 \end{align*}
 By Remark~\ref{rem:svals}, for $\theta \in \cN(\theta_0)$ we have 
 \begin{equation*}
  \sum_{j \in \cI_i^+} (f_{ij}(\theta)-\gamma) \frac{\partial_+}{\partial \theta_\ell}f_{ij}(\theta) = \sum_{j \in \cI_i^+} (f_{ij}(\theta)-\gamma) \frac{\partial_-}{\partial \theta_\ell}f_{ij}(\theta). 
 \end{equation*}
 Hence, $g_i^+(\cdot)$ is partially differentiable with respect to $\theta_\ell$ in $\theta \in \cN(\theta_0)$ and we have
 \begin{equation*}
   \frac{\partial}{\partial \theta_{\ell}}g_i^+(\theta) =  2 \sum_{j \in \cI_i^+} (f_{ij}(\theta) -\gamma) \frac{\partial_+}{\partial \theta_\ell} f_{ij}(\theta) = 2 \sum_{j \in \cI_i^+} (f_{ij}(\theta) -\gamma) \frac{\partial_-}{\partial \theta_\ell} f_{ij}(\theta), \quad \ell=1,\,\ldots,\,n_\theta.
 \end{equation*}
 Due to the continuity of all partial derivatives, $g_i^+(\cdot)$ is differentiable.
 
 \emph{Case 3:} It remains to consider the cases in which $f_{ij}(\theta_0) = \gamma$. As for Case 2 we obtain
 \begin{align*}
  \frac{\partial}{\partial \theta_{\ell}}g_i^0(\theta_0) &=  2 \sum_{j \in \cI_i^0} (f_{ij}(\theta_0) -\gamma) \frac{\partial_+}{\partial \theta_\ell} \left[f_{ij}(\theta_0) -\gamma\right]_+ \\ &= 2 \sum_{j \in \cI_i^0} (f_{ij}(\theta_0) -\gamma) \frac{\partial_-}{\partial \theta_\ell} \left[f_{ij}(\theta_0) -\gamma\right]_+ = 0, \quad \ell=1,\,\ldots,\,n_\theta.
 \end{align*}
 Moreover, in a neighborhood of $\theta_0$ we are either in the situation of Case 1, Case 2, or constant singular values and thus, $\lim_{\theta \to \theta_0} \frac{\partial}{\partial \theta_{\ell}}g_i^0(\theta) = 0$. Thus all partial derivatives are continuous and $g_i^0(\cdot)$ is differentiable.

Finally, since $g(\theta) = \frac{1}{\gamma} \sum_{s_i \in S}\big(g_i^+(\theta) + g_i^0(\theta) + g_i^-(\theta)\big)$, we obtain \eqref{eq:diffg}.
\end{proof}

Property i) establishes a connection between $\loss$ and the $\hinf$ optimization problem, while ii) facilitates gradient-based optimization.

We now outline an algorithm that uses $\loss$ to compute ROMs and reason why a respective optimization of $\loss$ yields ROMs which tend to have a small $\hinf$ error.

Our method consists of 3 main steps: \emph{initialization, $\gamma$-selection}, and \emph{optimization}. During initialization, an initial ROM of a user-selected structure is computed. The initial ROM may be obtained using the FOM and interpolatory MOR or may also be just a random model that meets the required structural constraints. In Section~\ref{sec:implVar}, we provide a comparison of different initialization strategies. After initialization, $\loss$ is minimized for decreasing values of $\gamma$. For that, in the $\gamma$-selection step, $\gamma$ is set to some strictly positive value and the optimization is started. In our \texttt{julia} implementation, in the optimization step, we use the nonlinear optimization engine \textsc{Optim} \cite{mogensen2018optim} configured with the Broyden-Fletcher-Goldfarb-Shanno (BFGS) method for approximating the inverse Hessian matrix and using the line search as implemented in \cite{Hager2006}.

After the optimization of $\loss$ for a fixed $\gamma$, in the $\gamma$-selection step, the optimization can either be called again with an increased or decreased value of $\gamma$ or the process can be stopped. As an example for a $\gamma$-selection step, in our numerical experiments, we provide a fixed sequence of strictly positive and strictly monotonically decreasing values for $\gamma$. We discuss different $\gamma$-selection strategies in Section~\ref{sec:implVar}.

At first, it may seem counter-intuitive to minimize the sum of squares of the error transfer function evaluated at a fixed set of points to minimize its $\hinf$ norm. This is because the $\hinf$ norm is usually attained at a single point or a few points at most. Furthermore, minimizing the sum of squares of the errors usually does not minimize the maximal error value. For this reason, $\loss$ contains the parameter $\gamma$ which functions as a cut-off and allows the modification of the least-squares optimization to get a good $\hinf$ approximation.

In particular, by introducing $\gamma$ we do not directly minimize the $\hinf$ error but instead construct an objective functional that attains its global minimum at zero, when the error at all considered points is below the threshold $\gamma$. Furthermore, the level $\gamma$ allows the distinction between samples that are still relevant for the optimization of $\loss$ and samples that can be ignored at the current iteration. However, note that a zero value of the objective functional does not imply that the $\hinf$ norm of the error is below $\gamma$. On the other hand, if $\loss$ attains a value larger than zero, the $\hinf$ norm is definitely larger than $\gamma$. Therefore, in our numerical experiments, our method is terminated for the first value of $\gamma$ at which after optimization the value of $\loss$ is greater than zero.

\begin{remark}
    In an earlier version of the proposed method, we minimize the simpler objective function
    \begin{align}
      \widetilde{\loss}(\gamma, G, G_r( \cdot, \theta), S) := \frac{1}{\gamma}\sum\limits_{s_i \in S} \left(\left[\left\|G(s_i)-G_r(s_i, \theta)\right\|_2 -\gamma\right]_+ \right)^{2},
    \end{align}
    which only considers the maximal singular value at a given sample point. Note that $\widetilde{\loss}$ depends smoothly on $\theta$ only if the maximal singular value of $G(s_i) - G_r(s_i, \theta)$ is simple for all $s_i$. Using the sum of \emph{all} singular values larger than $\gamma$ instead, we can relax this simplicity assumption.
\end{remark}

\begin{remark}
  This approach has three main benefits compared to directly optimizing the $\hinf$ norm.
  \begin{enumerate}
    \item The $\hinf$ norm depends nonsmoothly on the parameter vector, while the objective functional $\loss$ is differentiable. Therefore, a better convergence behavior can be expected when optimizing $\loss$.
    \item Computing $\hinf$ norms of large-scale transfer functions to a sufficient accuracy is computationally expensive, and due to the nonsmoothness, many evaluations of the $\hinf$ error may be required. On the other hand, \eqref{eq:loss} only requires evaluations of the reduced-order transfer function, since the evaluations of the large-scale transfer function at the sample points can be reused in subsequent iterations.
    \item When using $\loss$ instead of the $\hinf$ norm, we use more information of the transfer function when computing the gradient. Instead of focusing on minimizing the error of the transfer function where it is currently maximized, we consider a set of points where the error is above a certain threshold.
  \end{enumerate}
  Note that when performing a direct $\hinf$ minimization using nonsmooth optimization such as \texttt{GRANSO}, the drawbacks in Point 3 can be somewhat mitigated by an appropriate optimization algorithm. As an example, \texttt{GRANSO} aims at constructing an internal model of the nonsmoothness and by that collects information on the transfer function at the different peaks. 
\end{remark}



\section{Implementation Variants}
\label{sec:implVar}

Each of the three steps in our method, \emph{initialization, $\gamma$-selection}, and \emph{optimization}, can be changed independently. In this sense, we actually propose a template method for obtaining ROMs based on optimization of the ROM parameters. In the following, we briefly outline a few different options for the initialization and the $\gamma$-selection step. For the optimization we have only used BFGS as implemented in \cite{mogensen2018optim}.

\subsection{Initialization}

The task of the initialization step is to provide an initial ROM. The only constraint is that this ROM must satisfy the structural constraints imposed on the final ROM by the user. If the FOM already has the desired structure, then any appropriate structure-preserving MOR method (such as the ones described in Subsection~\ref{subsec:struct}) can be used to obtain an initial ROM. If the FOM does not satisfy the structural constraints or only frequency data is available, then an alternative way that does not use the FOM's system matrices must be used. We have tested three different ways of initialization. The first one is based on a fast greedy structure-preserving MOR method and the following two are a random initialization and a ROM that minimizes the least-squares error over all sample points.

The fast greedy structure-preserving MOR is presented in \cite{beddig2019model}. The main idea is to combine the method for $\hinf$ norm computation from \cite{AliBMSV17a, SchV18} with interpolatory MOR, i.\,e.,  iteratively compute the $\hinf$ norm of the difference between the FOM and the ROM and update the ROM such that it interpolates the FOM at the point on the imaginary axis where the $\hinf$ norm is attained.  While this method is fast and widely applicable, it suffers from a high error. This is due to the fact that the ROM obtained from greedy interpolation is only highly accurate at the interpolation points but does not uniformly approximate the FOM. In our numerical experiments shown in Section~\ref{sec:NumExp}, we have used this as our initialization technique.

The other two initialization methods are based on the generation of random matrices that satisfy the structural constraints. The construction of such matrices (symmetric, skew-symmetric, positive semi-definite) from general random matrices is straightforward. We initialize our method both directly with random systems but also introduce a preliminary least-squares approximation step, in which the function
\begin{align*}
  \LSQ(G,G_r( \cdot, \theta),S):=\sum\limits_{s_i\in S}\left\|G(s_i)-G_r(s_i,\theta)\right\|_2^2,
\end{align*}
where again $G$ and $G_r( \cdot, \theta)$ are transfer functions of the FOM and ROM, respectively, is minimized over $\theta$ using BFGS. The set $S$ contains the same samples that are used during the minimization of $\loss$ after the initialization. In our numerical experiments, these other two initialization techniques have also been tested but they have yielded inferior results, so we do not discuss these further.
 
\subsection{$\gamma$-Selection}

Since our method consists of multiple minimizations of $\loss$ for different values of $\gamma$ to find a ROM with small $\hinf$ error, a method for choosing $\gamma$ must be provided. Furthermore, it is the duty of the $\gamma$-selection step to determine when our method should be terminated. We present two different algorithms for deciding on termination or defining the next $\gamma$ defining $\loss$ in the course of our method. We explicitly assume that the ROM transfer functions of all intermediate iterates are well-defined and elements of $\mathcal{RH}_\infty^{\enu \times \enu}$. Note that this has always been satisfied in our numerical experiments. The problematic cases are those where some of the matrices of our parametrization are only positive semi-definite. For example, the transfer function of the reduced SSO system $\mechSysr$ in \eqref{eq:mechrom} is not well-defined, if, e.\,g., all the matrices $M_r$, $D_r$, and $K_r$ are singular and have a common nonzero vector in their kernels. However, this problem can be solved by adding small multiples of the identity matrix to the respective parametrized ROM matrices such that they are positive definite.

The first $\gamma$-selection method, called \emph{fixed $\gamma$-sequence}, is summarized in Algorithm~\ref{alg:fixed_sequence}. The fixed $\gamma$-sequence approach requires a predefined strictly monotonically decreasing sequence $(\gamma_j)_{j \in \mathbb{N}} \subset \R^+$ to define the optimization problem to be solved in each iteration. If the minimum of $\loss$ for a given $\gamma_j$ is zero, then for all $s_i \in S$, we have that ${\|G(s_i)-G_r(s_i,\theta_j)\|}_2 \le \gamma_j$. Therefore, we terminate the inner optimization loop (in line 2 of Alg.~\ref{alg:fixed_sequence}) when the value of $\loss$ is less than a prescribed \emph{termination tolerance} $\varepsilon$. In this case, the optimization is started again for the smaller value $\gamma_{j+1}$. On the other hand, if the optimization does not succeed in reducing $\loss$ below $\epsilon$, which may happen if it converges to a local (or global) optimum with value greater than $\varepsilon$, or if the number of permitted optimization steps is exceeded, the algorithm is terminated. 
In the numerical experiments reported in Section~\ref{sec:NumExp}, we have used Alg.~\ref{alg:fixed_sequence} with termination tolerance $\varepsilon = 10^{-14}$. The scaling factor $1/\gamma$ in \eqref{eq:loss} is introduced to make the final loss values, that are obtained after each optimization for the different values of $\gamma$, comparable such that the same termination tolerance can be used for $\gamma$ varying over several orders of magnitude.

\begin{algorithm}[tb]
  \LinesNumbered
  \SetAlgoLined
  \DontPrintSemicolon
  \SetKwInOut{Input}{Input}\SetKwInOut{Output}{Output}
  \Input{FOM transfer function $G\in\mathcal{RH}_\infty^{n_u \times n_u}$, initial parametrized ROM transfer function $G_r( \cdot, \theta_0) \in \rhinf^{n_u \times n_u}$ with parameter $\theta_0 \in \R^{n_\theta}$, sample point set $S \subset \mathrm{i}\R$, strictly monotonically decreasing sequence $(\gamma_j)_{j\in \mathbb{N}} \subset \R^+$, termination tolerance $\varepsilon > 0$}
  \Output{final ROM parameters $\theta_{\rm fin} \in \R^{n_\theta}$}
  Set $j:=0$ and $\alpha_0:=0$.\;
  \While{$\alpha_{j} \le \varepsilon$}{
    Solve the minimization problem $\alpha_{j+1} := \min_{\theta\in \R^{n_\theta}}\loss(\gamma_{j+1},G,G_r( \cdot, \theta),S)$ with minimizer $\theta_{j+1} \in \R^{n_\theta}$, initialized at $\theta_{j}$. \;
    Set $j := j+1$. \;
  }
  Set $\theta_{\rm fin}:=\theta_{j-1}$.\;
  \caption{Fixed $\gamma$-sequence}
  \label{alg:fixed_sequence}
\end{algorithm}

Another approach is to select the sequence $(\gamma)_{j \in \N}$ by bisection. This is outlined in Algorithm~\ref{alg:bisection}. The difference from the previous $\gamma$ selection approach is that here, the user only needs to provide an upper bound $\gamma_{\rm u}$ and then $\gamma$ is adjusted based on the results of the optimization of the ROM parameters. As before, the termination tolerance $\varepsilon_2$ is the maximal final objective value at which $\gamma$ is still reduced. If the relative distance between $\gamma_{\rm u}$ and $\gamma_{\rm l}$ is below the bisection tolerance $\varepsilon_1$, then the bisection is stopped. We choose the relative difference, since the final accuracy that can be achieved is not known beforehand. Choosing the absolute difference between $\gamma_{\rm u}$ and $\gamma_{\rm l}$ as stopping condition can thus lead to a premature termination or unnecessarily many iterations. In our experiments this algorithm takes longer than Algorithm~\ref{alg:fixed_sequence} and leads to slightly larger $\hinf$ errors. However bisection is succesfully used in \cite{SchV2021Adaptive} in conjunction with an adaptive sample point selection.

\begin{algorithm}[tbh]
  \LinesNumbered
  \SetAlgoLined
  \DontPrintSemicolon
  \SetKwInOut{Input}{Input}\SetKwInOut{Output}{Output}
  \Input{FOM transfer function $G\in\mathcal{RH}_{\infty}^{n_u \times n_u}$, initial ROM transfer function $G_r( \cdot, \theta_0) \in \rhinf^{n_u \times n_u}$ with parameter $\theta_0 \in \R^{n_\theta}$, sample point set $S \subset \mathrm{i}\R$, upper bound $\gamma_{\rm u} > 0$, bisection tolerance $\varepsilon_1 > 0$, termination tolerance $\varepsilon_2 > 0$}
  \Output{final ROM parameters $\theta_{\rm fin} \in \R^{n_\theta}$}
  Set $j:=0$ and $\gamma_{\rm l}:=0$.\;
  \While{$(\gamma_{\rm u}-\gamma_{\rm l})/(\gamma_{\rm u}+\gamma_{\rm l}) > \varepsilon_1$}{
    Set $\gamma=(\gamma_{\rm u}+\gamma_{\rm l})/2$.\;
    Solve the minimization problem $\alpha_{j+1} := \min_{\theta\in \R^{n_\theta}}\loss(\gamma,G,G_r( \cdot, \theta),S)$ with minimizer $\theta_{j+1} \in \R^{n_\theta}$, initialized at $\theta_j$. \;
  \eIf{$\alpha_{j+1} > \varepsilon_2$}{
    Set $\gamma_{\rm l}:=\gamma$.\;
    }{
    Set $\gamma_{\rm u}:=\gamma$.\;
  }
  Set $j:=j+1$.
  }
  Set $\theta_{\rm fin} := \theta_j$.\;
  \caption{$\gamma$-Bisection}
  \label{alg:bisection}
\end{algorithm}

\section{Numerical Experiments}
\label{sec:NumExp}

In the following, we compare the approximation quality, in terms of the $\hinf$ norm and the $\htwo$ norm, of our method (SOBMOR) to other, previously established, methods that also preserve the structure of either a pH or SSO system. For both model types we provide a comparison of our method with the structure-preserving MOR methods presented in Section~\ref{sec:background}.
For both model structures optimization-based MOR outperforms the other structure-preserving methods by a few orders of magnitude with respect to accuracy in terms of the $\hinf$ norm.

\subsection{Port-Hamiltonian Systems}

The pH model we consider is used in~\cite{Gugercin2012} to showcase the effectiveness of pH-IRKA. It models a mass-spring-damper system. The inputs are two forces applied to the first two masses and the outputs are their respective velocities. For details on the setup of the system matrices we refer to~\cite{Gugercin2012}. In our experiments, we follow \cite{Gugercin2012} and first use a FOM with the moderate state dimension of $100$ in order to deliver a thorough analysis of the error based on $\hinf$ and $\htwo$ norms. Then a larger state-space dimension of $20,000$ is used to show the effectiveness of SOBMOR also in the large-scale case.

For the sample points, we use 800 logarithmically spaced points between $10^{-4}$ and $10^{3}$. We add a few sample points outside of this interval, namely $0, 10^{-8}, 10^{-7}, 10^{-6}, 10^4, 10^5$, and $10^6$. The results shown here are obtained with Algorithm~\ref{alg:fixed_sequence}, in which the objective function $L$ is minimized for a sequence of monotonically decreasing ${(\gamma_j)}_{j\in\mathbb{N}}$ until the computed minimum is different from zero (up to the termination tolerance). Here we choose ${(\gamma_j)}_{j\in\mathbb{N}}$ as a sequence of 300 logarithmically spaced points between $10^{-1}$ and $10^{-14}$.

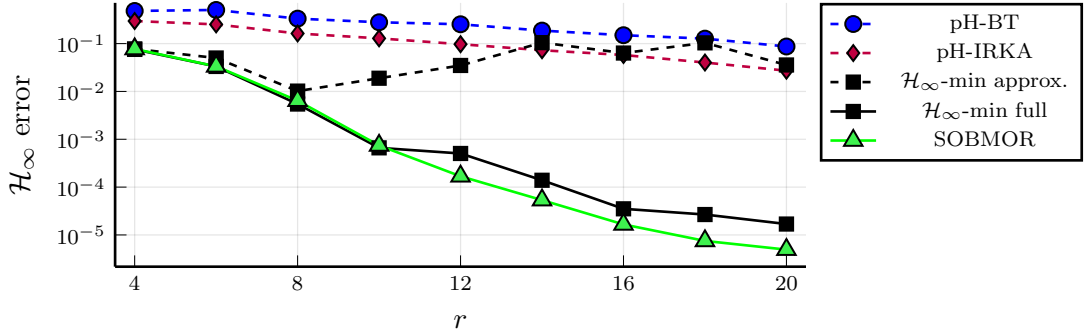
\begin{figure}[tb]
  \centering
  \begin{tikzpicture}[/tikz/background rectangle/.style={fill={rgb,1:red,1.0;green,1.0;blue,1.0}, draw opacity={1.0}}, show background rectangle]
\begin{axis}[title={}, title style={at={{(0.5,1)}}, font={{\fontsize{14 pt}{18.2 pt}\selectfont}}, color={rgb,1:red,0.0;green,0.0;blue,0.0}, draw opacity={1.0}, rotate={0.0}}, legend style={color={rgb,1:red,0.0;green,0.0;blue,0.0}, draw opacity={1.0}, line width={1}, solid, fill={rgb,1:red,1.0;green,1.0;blue,1.0}, fill opacity={1.0}, text opacity={1.0}, font={{\fontsize{8 pt}{10.4 pt}\selectfont}}, at={(1.02, 1)}, anchor={north west}}, axis background/.style={fill={rgb,1:red,1.0;green,1.0;blue,1.0}, opacity={1.0}}, anchor={north west}, xshift={1.0mm}, yshift={-1.0mm}, width={0.7\textwidth}, height={0.25\textheight}, scaled x ticks={false}, xlabel={$r$}, x tick style={color={rgb,1:red,0.0;green,0.0;blue,0.0}, opacity={1.0}}, x tick label style={color={rgb,1:red,0.0;green,0.0;blue,0.0}, opacity={1.0}, rotate={0}}, xlabel style={, font={{\fontsize{11 pt}{14.3 pt}\selectfont}}, color={rgb,1:red,0.0;green,0.0;blue,0.0}, draw opacity={1.0}, rotate={0.0}}, xmajorgrids={true}, xmin={3.52}, xmax={20.48}, xtick={{4.0,8.0,12.0,16.0,20.0}}, xticklabels={{$4$,$8$,$12$,$16$,$20$}}, xtick align={inside}, xticklabel style={font={{\fontsize{8 pt}{10.4 pt}\selectfont}}, color={rgb,1:red,0.0;green,0.0;blue,0.0}, draw opacity={1.0}, rotate={0.0}}, x grid style={color={rgb,1:red,0.0;green,0.0;blue,0.0}, draw opacity={0.1}, line width={0.5}, solid}, axis x line*={left}, x axis line style={color={rgb,1:red,0.0;green,0.0;blue,0.0}, draw opacity={1.0}, line width={1}, solid}, scaled y ticks={false}, ylabel={$\hinf$ error}, y tick style={color={rgb,1:red,0.0;green,0.0;blue,0.0}, opacity={1.0}}, y tick label style={color={rgb,1:red,0.0;green,0.0;blue,0.0}, opacity={1.0}, rotate={0}}, ylabel style={, font={{\fontsize{11 pt}{14.3 pt}\selectfont}}, color={rgb,1:red,0.0;green,0.0;blue,0.0}, draw opacity={1.0}, rotate={0.0}}, ymode={log}, log basis y={10}, ymajorgrids={true}, ymin={2.169570177021927e-6}, ymax={0.7106243024441214}, ytick={{1.0e-5,0.0001,0.001,0.01,0.1}}, yticklabels={{$10^{-5}$,$10^{-4}$,$10^{-3}$,$10^{-2}$,$10^{-1}$}}, ytick align={inside}, yticklabel style={font={{\fontsize{8 pt}{10.4 pt}\selectfont}}, color={rgb,1:red,0.0;green,0.0;blue,0.0}, draw opacity={1.0}, rotate={0.0}}, y grid style={color={rgb,1:red,0.0;green,0.0;blue,0.0}, draw opacity={0.1}, line width={0.5}, solid}, axis y line*={left}, y axis line style={color={rgb,1:red,0.0;green,0.0;blue,0.0}, draw opacity={1.0}, line width={1}, solid}, colorbar style={title={}}, point meta max={nan}, point meta min={nan}]
    \addplot[color=blue, name path={fdc86282-b401-4979-abae-4468f1aaf760}, draw opacity={1.0}, line width={1}, solid, mark={*}, mark size={3.0 pt}, mark options={color={rgb,1:red,0.0;green,0.0;blue,0.0}, draw opacity={1.0}, fill=blue, fill opacity={1.0}, line width={0.75}, rotate={0}, solid}, dashed]
        coordinates {
            (4,0.4813471942945101)
            (6,0.5053326346940811)
            (8,0.33265402950983736)
            (10,0.2792712148058059)
            (12,0.2535149168194241)
            (14,0.1870396561695439)
            (16,0.14977502520076516)
            (18,0.12700565585039805)
            (20,0.086946101277349)
        }
        ;
    \addlegendentry {pH-BT}
    \addplot[color=purple, name path={5e7e1fa2-9a18-4027-b9a8-5d99613a1c14}, draw opacity={1.0}, line width={1}, solid, mark={diamond*}, mark size={3.0 pt}, mark options={color=black, draw opacity={1.0}, fill=purple, fill opacity={1.0}, line width={0.75}, rotate={0}, solid}, dashed]
        coordinates {
            (4,0.29644543479395774)
            (6,0.2520497943209842)
            (8,0.16276581594136239)
            (10,0.1287240783673755)
            (12,0.09709961195728065)
            (14,0.07263289970224365)
            (16,0.05774756684325709)
            (18,0.04021116428444131)
            (20,0.027135580825011017)
        }
        ;
    \addlegendentry {pH-IRKA}
    \addplot[color=black, name path={0e11bac5-ad65-41ff-b905-53a000898e6c}, draw opacity={1.0}, line width={1}, solid, mark={square*}, mark size={2.5 pt}, mark options={color=black, draw opacity={1.0}, fill=black, fill opacity={1.0}, line width={0.75}, rotate={0}, solid}, dashed]
        coordinates {
            (4,0.07735495023367092)
            (6,0.050013119325081054)
            (8,0.010166045719295209)
            (10,0.018887457655148494)
            (12,0.034907418680167855)
            (14,0.10475998909237694)
            (16,0.06348716318338848)
            (18,0.1036321884150996)
            (20,0.03574214744151978)
        }
        ;
      \addlegendentry {$\hinf$-min approx.}
    \addplot[color=black, name path={0e11bac5-ad65-41ff-b905-53a000898e6c}, draw opacity={1.0}, line width={1}, solid, mark={square*}, mark size={2.5 pt}, mark options={color=black, draw opacity={1.0}, fill=black, fill opacity={1.0}, line width={0.75}, rotate={0}, solid}]
        coordinates {
          (4, 7.568281e-02)
          (6, 3.329279e-02)
          (8, 5.452303e-03)
          (10,6.617802e-04)
          (12,5.047309e-04)
          (14,1.390656e-04)
          (16,3.517546e-05)
          (18,2.670271e-05)
          (20,1.690427e-05)
        }
        ;
      \addlegendentry {$\hinf$-min full}
    \addplot[color=green, name path={ee35c385-852e-4ea4-bc93-4a19681d6410}, draw opacity={1.0}, line width={1}, solid, mark={triangle*}, mark size={4.0 pt}, mark options={color=black, draw opacity={1.0}, fill=ourcolor, fill opacity={1.0}, line width={0.75}, rotate={0}, solid}]
        coordinates {
          (4, 7.716396e-02)
          (6, 3.342240e-02)
          (08,6.411867e-03)
          (10,7.429191e-04)
          (12,1.685170e-04)
          (14,5.347107e-05)
          (16,1.659474e-05)
          (18,7.503271e-06)
          (20,4.931897e-06)
        }
        ;
    \addlegendentry {SOBMOR}
\end{axis}
\end{tikzpicture}
  \caption{$\hinf$ errors for different MOR methods for varying ROM orders. The FOM is the mass-spring-damper pH system in \cite{Gugercin2012} of order 100.}%
  \label{fig:ph_hinferrors}
\end{figure}

We compare our method to the previous pH-MOR algorithms pH-BT and pH-IRKA as well as to the two variants of direct $\hinf$ error minimization using \texttt{GRANSO} (configured with its standard optimization parameters) in conjunction with either \texttt{linorm\_subsp} or \texttt{ab13hd} to compute the $\hinf$ errors to be minimized. We emphasize that \texttt{ab13hd} is limited to FOMs with a moderate state-space dimension and cannot exploit sparsity. In Fig.~\ref{fig:ph_hinferrors} the $\hinf$ errors are shown for different reduced model orders. The parameter optimization methods yield a better accuracy compared to the previous methods for small reduced model orders (4--8). However, the direct $\hinf$ error minimization approach that is based on \texttt{linorm\_subsp} yields worse results as the reduced model order increases beyond $r=8$. This is due to inaccuracies in the $\hinf$ norm approximation by \texttt{linorm\_subsp}, which are known to occur when the transfer function has many different peaks with similar magnitude. On the other hand, when using our method and the direct $\hinf$ minimization based on \texttt{ab13hd} (which is computationally prohibitive for larger FOMs, i.\,e., the most common use-case of MOR) the error continues to decrease for larger reduced orders. Finally, these methods outperform all other structure-preserving methods by three orders of magnitude.
For the reduced model orders between 12 and 20 our method outperforms both direct $\hinf$ minimization approaches.

\begin{figure}[bt]
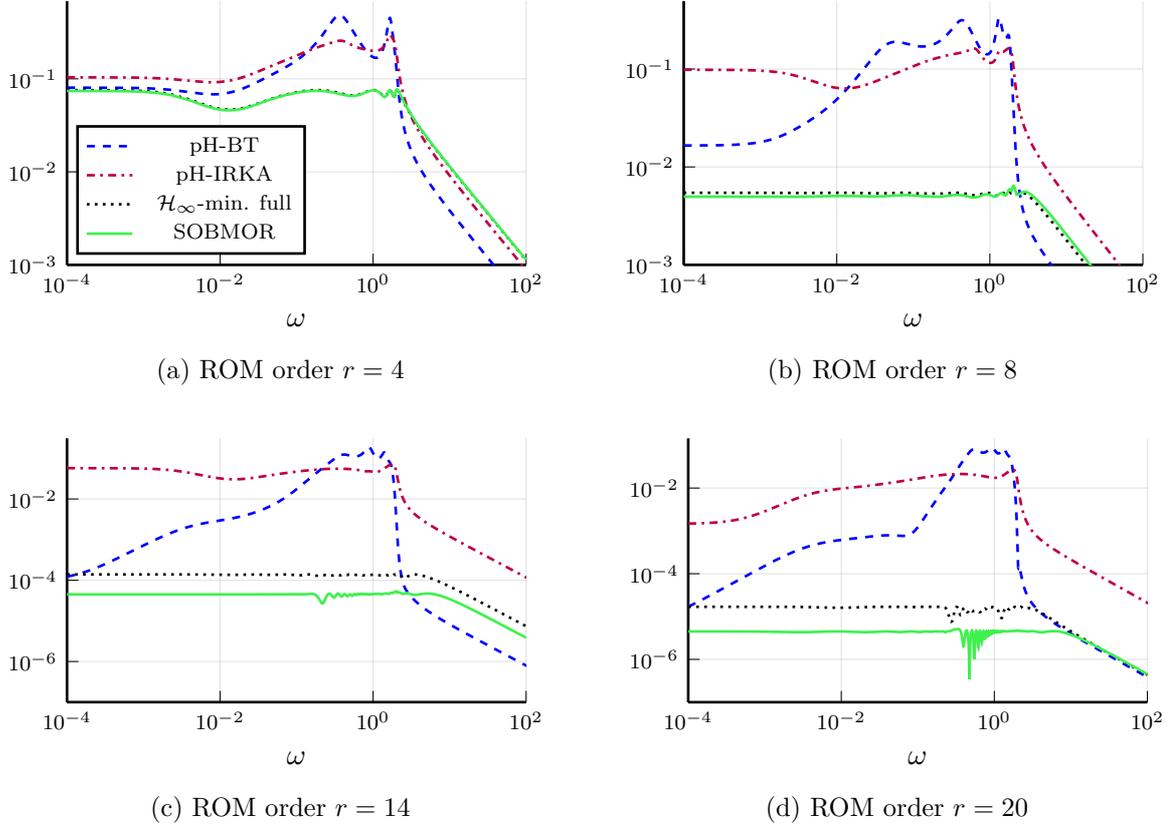

  \centering
  \begin{tabular}{cc}
    \input{./PlotSources/ErrorTransfunPlots/PH/MethodComparison/errors4.tex} &
    \input{./PlotSources/ErrorTransfunPlots/PH/MethodComparison/errors8.tex} \\
    (a) ROM order $r=4$ & (b) ROM order $r=8$ \vspace{0.5cm}\\ 
    \input{./PlotSources/ErrorTransfunPlots/PH/MethodComparison/errors14.tex} &
    \input{./PlotSources/ErrorTransfunPlots/PH/MethodComparison/errors20.tex}\\
    (c) ROM order $r=14$ & (d) ROM order $r=20$ \\
  \end{tabular}
  \caption{Maximal singular value of the error transfer function for different 
  MOR methods as $r$ is varied. The FOM is the mass-spring-damper pH system in \cite{Gugercin2012} of order 100.}%
  \label{fig:ph_errTF_methods}
\end{figure}

In Figure~\ref{fig:ph_errTF_methods}, the maximum singular values of the transfer function of the error between the given model and its approximations obtained from the different MOR methods are shown for different $r$. As $r$ increases, the error obtained with our method tends to not vary much over different orders of magnitudes and has a well-balanced accuracy over the imaginary axis. This desired behavior of the error is not achieved with the other methods.  

Furthermore, it can be observed that the approximation is consistently better for most frequencies. This implies that the larger $\hinf$ error which the other methods produce is not just due to a few areas on the imaginary axis where the transfer function is approximated poorly. Therefore, even though our method is designed to minimize the $\hinf$ error, in the comparison of the $\htwo$ errors in Figure~\ref{fig:ph_h2errors}, the approximation error is again several orders of magnitude smaller when using our method.

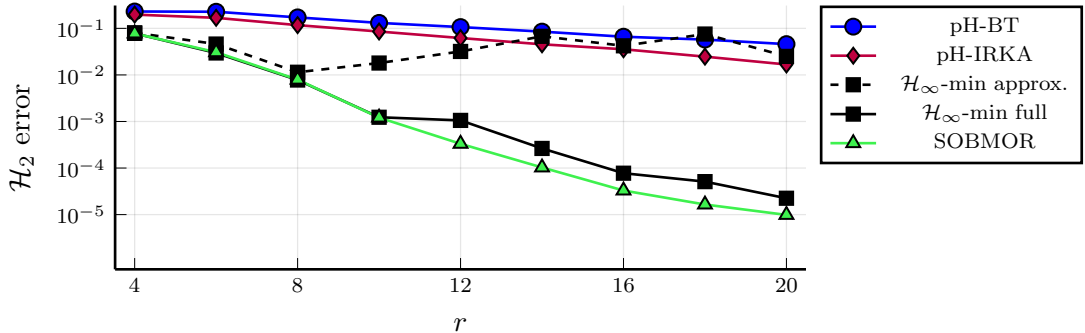
\begin{figure}[tbh]
  \centering
  \begin{tikzpicture}[/tikz/background rectangle/.style={fill={rgb,1:red,1.0;green,1.0;blue,1.0}, draw opacity={1.0}}, show background rectangle]
\begin{axis}[title={}, title style={at={{(0.5,1)}}, font={{\fontsize{14 pt}{18.2 pt}\selectfont}}, color={rgb,1:red,0.0;green,0.0;blue,0.0}, draw opacity={1.0}, rotate={0.0}}, legend style={color={rgb,1:red,0.0;green,0.0;blue,0.0}, draw opacity={1.0}, line width={1}, solid, fill={rgb,1:red,1.0;green,1.0;blue,1.0}, fill opacity={1.0}, text opacity={1.0}, font={{\fontsize{8 pt}{10.4 pt}\selectfont}}, at={(1.02, 1)}, anchor={north west}}, axis background/.style={fill={rgb,1:red,1.0;green,1.0;blue,1.0}, opacity={1.0}}, anchor={north west}, xshift={1.0mm}, yshift={-1.0mm}, width={0.7\textwidth}, height={0.25\textheight}, scaled x ticks={false}, xlabel={$r$}, x tick style={color={rgb,1:red,0.0;green,0.0;blue,0.0}, opacity={1.0}}, x tick label style={color={rgb,1:red,0.0;green,0.0;blue,0.0}, opacity={1.0}, rotate={0}}, xlabel style={, font={{\fontsize{11 pt}{14.3 pt}\selectfont}}, color={rgb,1:red,0.0;green,0.0;blue,0.0}, draw opacity={1.0}, rotate={0.0}}, xmajorgrids={true}, xmin={3.52}, xmax={20.48}, xtick={{4.0,8.0,12.0,16.0,20.0}}, xticklabels={{$4$,$8$,$12$,$16$,$20$}}, xtick align={inside}, xticklabel style={font={{\fontsize{8 pt}{10.4 pt}\selectfont}}, color={rgb,1:red,0.0;green,0.0;blue,0.0}, draw opacity={1.0}, rotate={0.0}}, x grid style={color={rgb,1:red,0.0;green,0.0;blue,0.0}, draw opacity={0.1}, line width={0.5}, solid}, axis x line*={left}, x axis line style={color={rgb,1:red,0.0;green,0.0;blue,0.0}, draw opacity={1.0}, line width={1}, solid}, scaled y ticks={false}, ylabel={$\htwo$ error}, y tick style={color={rgb,1:red,0.0;green,0.0;blue,0.0}, opacity={1.0}}, y tick label style={color={rgb,1:red,0.0;green,0.0;blue,0.0}, opacity={1.0}, rotate={0}}, ylabel style={, font={{\fontsize{11 pt}{14.3 pt}\selectfont}}, color={rgb,1:red,0.0;green,0.0;blue,0.0}, draw opacity={1.0}, rotate={0.0}}, ymode={log}, log basis y={10}, ymajorgrids={true}, ymin={6.67842871883e-7}, ymax={0.3096736571595799}, ytick={{1.0e-5,0.0001,0.001,0.01,0.1}}, yticklabels={{$10^{-5}$,$10^{-4}$,$10^{-3}$,$10^{-2}$,$10^{-1}$}}, ytick align={inside}, yticklabel style={font={{\fontsize{8 pt}{10.4 pt}\selectfont}}, color={rgb,1:red,0.0;green,0.0;blue,0.0}, draw opacity={1.0}, rotate={0.0}}, y grid style={color={rgb,1:red,0.0;green,0.0;blue,0.0}, draw opacity={0.1}, line width={0.5}, solid}, axis y line*={left}, y axis line style={color={rgb,1:red,0.0;green,0.0;blue,0.0}, draw opacity={1.0}, line width={1}, solid}, colorbar style={title={}}, point meta max={nan}, point meta min={nan}]
    \addplot[color=blue, name path={c29ff3de-8649-46b3-b462-89001bcb934e}, draw opacity={1.0}, line width={1}, solid, mark={*}, mark size={3.0 pt}, mark options={color={rgb,1:red,0.0;green,0.0;blue,0.0}, draw opacity={1.0}, fill=colorone, fill opacity={1.0}, line width={0.75}, rotate={0}, solid}]
        coordinates {
            (4,0.23017649150521305)
            (6,0.22748024019224755)
            (8,0.17200427306859975)
            (10,0.13106252746419725)
            (12,0.10657813594918993)
            (14,0.0851264732265212)
            (16,0.06614633866271054)
            (18,0.05744663877285041)
            (20,0.045876845964380734)
        }
        ;
    \addlegendentry {pH-BT}
    \addplot[color=purple, name path={2881e45f-97b1-4b88-91a0-f19f351aad0d}, draw opacity={1.0}, line width={1}, solid, mark={diamond*}, mark size={3.0 pt}, mark options={color={rgb,1:red,0.0;green,0.0;blue,0.0}, draw opacity={1.0}, fill=purple, fill opacity={1.0}, line width={0.75}, rotate={0}, solid}]
        coordinates {
            (4,0.19764106247068275)
            (6,0.16847013612811368)
            (8,0.11734529307456867)
            (10,0.08550086038303994)
            (12,0.06200037028121121)
            (14,0.04545206701894334)
            (16,0.035583538449677314)
            (18,0.024692098762706182)
            (20,0.016756190657142143)
        }
        ;
    \addlegendentry {pH-IRKA}
    \addplot[color=black, name path={f86e0b97-dce2-4be9-a13e-057d46130b97}, draw opacity={1.0}, line width={1}, solid, mark={square*}, mark size={2.5 pt}, mark options={color={rgb,1:red,0.0;green,0.0;blue,0.0}, draw opacity={1.0}, fill=black, fill opacity={1.0}, line width={0.75}, rotate={0}, solid}, dashed]
        coordinates {
            (4,0.08040156264594706)
            (6,0.04606247171068631)
            (8,0.011419987994304938)
            (10,0.018004720128258943)
            (12,0.03193855638526959)
            (14,0.06751223135142101)
            (16,0.04218987264154491)
            (18,0.07556790995374629)
            (20,0.025001114641358596)
        }
        ;
    \addlegendentry {$\hinf$-min approx.}
    \addplot[color=black, name path={f86e0b97-dce2-4be9-a13e-057d46130b97}, draw opacity={1.0}, line width={1}, solid, mark={square*}, mark size={2.5 pt}, mark options={color={rgb,1:red,0.0;green,0.0;blue,0.0}, draw opacity={1.0}, fill=black, fill opacity={1.0}, line width={0.75}, rotate={0}, solid}]
        coordinates {
            (4,0.0780290668546802)
            (6,0.02959373208802068)
            (8,0.007772866619934149)
            (10,0.0012327061520012877)
            (12,0.0010550239907123156)
            (14,0.00026360486662181307)
            (16,7.74934974286619e-5)
            (18,5.1078315489733386e-5)
            (20,2.246048006973132e-5)
        }
        ;
    \addlegendentry {$\hinf$-min full}
    \addplot[color=ourcolor, name path={0cb40dd5-bd1a-472b-a671-b156d11f5b21}, draw opacity={1.0}, line width={1}, solid, mark={triangle*}, mark size={3.0 pt}, mark options={color={rgb,1:red,0.0;green,0.0;blue,0.0}, draw opacity={1.0}, fill=ourcolor, fill opacity={1.0}, line width={0.75}, rotate={0}, solid}]
      coordinates {
        (4,0.07771332110601026)
        (6,0.030733233145957082)
        (8,0.00791980283324156)
        (10,0.0012121989700079382)
        (12,0.00033156096069169227)
        (14,0.00010293402759170686)
        (16,3.29103118554293e-5)
        (18,1.647938188774737e-5)
        (20,9.850772084654591e-6)
      }
        ;
    \addlegendentry {SOBMOR}
\end{axis}
\end{tikzpicture}
  \caption{$\htwo$ error for different MOR methods as ROM order $r$ is varied. The FOM is the mass-spring-damper pH model in \cite{Gugercin2012} of order 100.}%
  \label{fig:ph_h2errors}
\end{figure}

In order to illustrate the performance of SOBMOR also in large-scale settings, for the next experiments we use the FOM dimension $n=20,000$. In Figure~\ref{fig:ph_large_comparison}, we compare the performance of our method against pH-IRKA, since both methods rely only on sparse linear solves during the initial sampling and are thus appropriate for the large-scale case. With SOBMOR, we again obtain an error that is lower by a few orders of magnitude. In fact, in \cite{Gugercin2012}, it is reported that the $\hinf$ error is still above $10^{-4}$ for ROMs of order 50 when using pH-IRKA.



In our experiments, the runtime of our algorithm is mainly driven by the choice of the ROM order. Table~\ref{tab:runtime_comparison} shows a comparison of the runtime for different ROM orders. As expected, the runtime of pH-BT does not vary with $r$ as its main computational burden is only dependent on the FOM dimension. On the other hand the runtime of pH-IRKA is mostly dependent on the number of fixed-point iterations until convergence. In our experiments the number of iterations was not correlated to $r$, which explains the similar runtimes. The optimization-based approaches tend to take more time, as the reduced model order increases. 
However, the runtime does not increase monotonically, as $r$ increases.
Note that our method takes significantly less time compared to the direct $\hinf$ minimization approach for all tested reduced model orders.

\begin{table}[htpb]
  \centering
  \caption{Runtimes (in seconds) for different MOR algorithms on the pH model in \cite{Gugercin2012} of order 100 for different reduced model orders $r$.}
  \label{tab:runtime_comparison}
  \begin{tabular}{c|cccc}
    $r$ & pH-BT & pH-IRKA & $\hinf$-min. full & SOBMOR \\ \hline
    4 &  8.75e$-$03 & 4.56e$-$02 & 1.79e+02 & 1.78e+00 \\
    6 &  8.91e$-$03 & 4.58e$-$02 & 1.39e+03 & 6.99e+01 \\
    8 &  8.90e$-$03 & 4.50e$-$02 & 7.10e+02 & 2.27e+01 \\
    10 & 8.83e$-$03 & 4.53e$-$02 & 7.84e+02 & 7.45e+01 \\
    12 & 8.84e$-$03 & 4.57e$-$02 & 2.92e+02 & 1.86e+02 \\
    14 & 8.85e$-$03 & 4.66e$-$02 & 3.32e+02 & 2.83e+02 \\
    16 & 8.84e$-$03 & 4.57e$-$02 & 6.44e+03 & 3.90e+02 \\
    18 & 8.85e$-$03 & 4.55e$-$02 & 6.52e+03 & 1.15e+03 \\
    20 & 8.85e$-$03 & 4.59e$-$02 & 5.18e+03 & 8.08e+02 \\
  \end{tabular}
\end{table}

The strong dependence of our method's runtime on $r$ is mostly due to the more expensive gradient computation and increased number of iterations for the larger ROMs. On the other hand, the FOM only has to be sampled once in the beginning. In fact, even during the experiments in which the FOM dimension is set to $20,000$ the runtime ranges between 19.3s (for $r=4$) and 932.10s (for $r=16$). Thus, the FOM order does not have a major influence on the runtime in our experiments. However, in some settings the sampling procedure may also be computationally taxing, such that the number of sample points must be kept small. For that, we present an adaptive sampling strategy for our method in \cite{SchV2021Adaptive}.

Since pH-IRKA converged in well under a second in each experiment, our method is not yet competitive with respect to runtime. However, since MOR is usually performed offline to speed up simulations or the computation of a control signal, the greatly increased accuracy well compensates for the increased runtime. 
Note that as $\gamma$ is reduced during the course of our method, the number of iterations during the gradient-based optimization increases. This behavior can be observed in Figure~\ref{fig:iters_over_gamma}. This usually makes the final reduction steps the most expensive ones. When a smaller runtime is desired, a desired level of accuracy leading to a final value of $\gamma$ may be prescribed. Then, the final and most expensive optimizations may be avoided. In this way, we can trade off a lower accuracy of the ROM for a much faster termination of our method.

\begin{figure}[bth]
  \centering
  \begin{tabular}{cc}
    \input{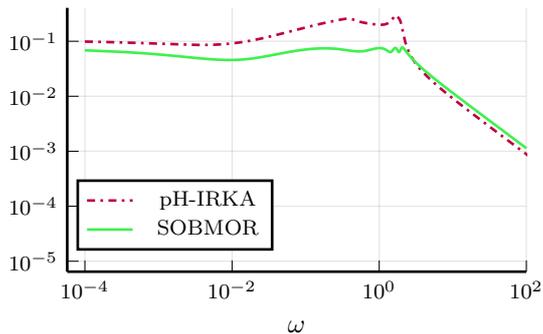} &
    \input{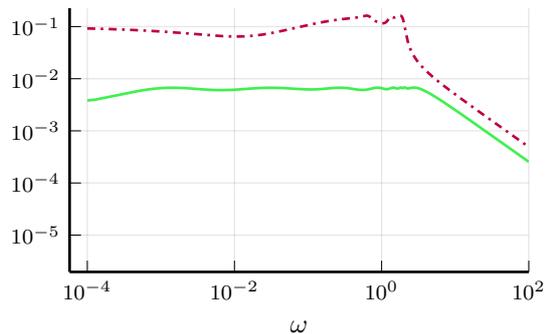} \\
    (a) ROM order $r=4$ & (b) ROM order $r=8$ \vspace{0.5cm}\\ 
    \input{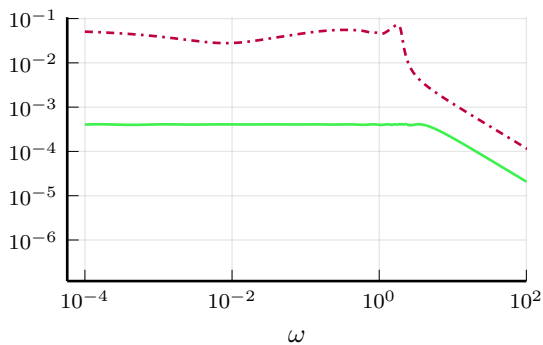} &
    \input{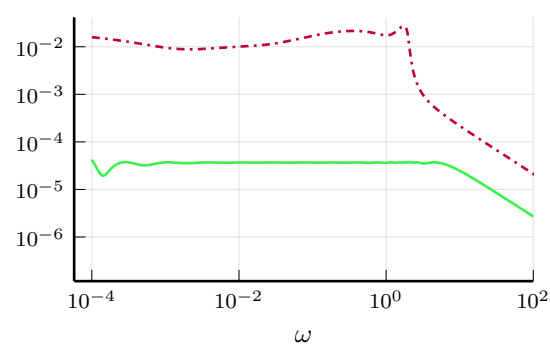}\\
    (c) ROM order $r=14$ & (d) ROM order $r=20$ \\
  \end{tabular}
  \caption{Maximum singular value of the error transfer function for different MOR methods as $r$ is varied. The FOM is the mass-spring-damper pH system in \cite{Gugercin2012} of order $20,000$. Note that pH-BT is computationally prohibitively expensive for such large FOM orders, since the full balancing transformation (which is needed for pH-BT) cannot be computed efficiently. Therefore, we show no results for pH-BT.}%
  \label{fig:ph_large_comparison}
\end{figure}

\begin{figure}[bth]
  \centering
  \input{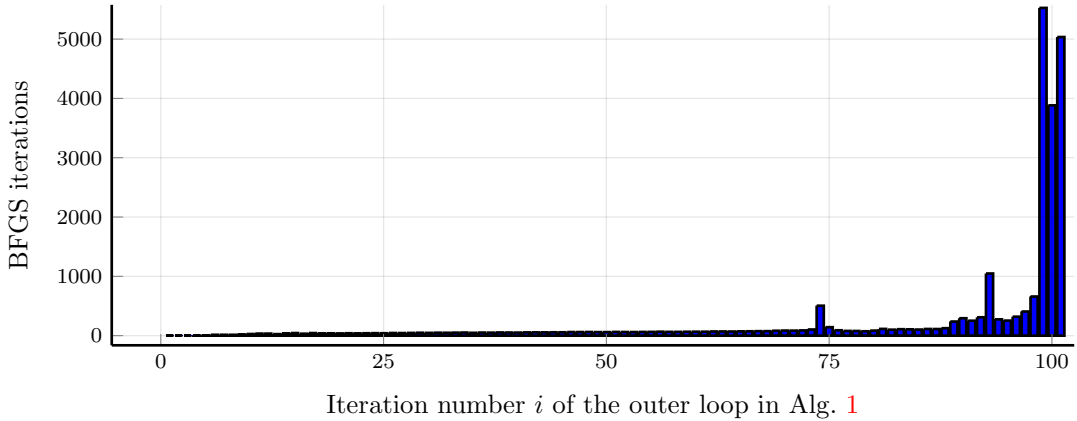}
  \caption{Number of iterations inside the optimization for subsequent accuracy levels $\gamma_i$ for $r=20$.}%
  \label{fig:iters_over_gamma}
\end{figure}

\subsection{Second-Order Systems}

The scalable mechanical system that we use to showcase the effectiveness of our method for second-order systems is introduced in \cite{Truhar2009} as a triple mass-spring-damper chain. We refer to \cite{Truhar2009} for a detailed description of the mass-spring-damper chain model as well as the setup of the system matrices $M$, $D$, and $K$. In order to compare our method with second-order balanced truncation and to compute the $\hinf$ norm using \cite{Boyd1990} for a globally valid result, we choose only a medium state dimension ($n_x=301$) for our experiments. Since in \cite{Truhar2009}, no input or output mapping is provided, we set $B:=\begin{bmatrix}
    I_3 & 0_{n_x-3\times 3}
  \end{bmatrix}^\T$,
which corresponds to actuation and measurement at the first three masses. For the sample points we choose 300 logarithmically spaced points between $10^{-4}$ and $10^{3}$. Again, we add a few points outside of this interval, namely $0, 10^{-8}, 10^{-7}, 10^{-6}, 10^{-4}, 10^4, 10^5$, and $10^6$. The $\gamma$ sequence is chosen as in the previously described experiments. 

Next to the full parametrization for second-order systems introduced in Lemma~\ref{lem:sndparam}, we also conduct experiments, in which $M(\theta)$ is a positive semidefinite diagonal matrix. In this way, we can use fewer parameters at (theoretically) no loss in accuracy, since any symmetric second-order system $\Sigma_{sso}$ can be written with diagonal $M$ by a change of basis, while preserving symmetries of the system.

\begin{figure}[bth]
  \centering
  \begin{tikzpicture}[/tikz/background rectangle/.style={fill={rgb,1:red,1.0;green,1.0;blue,1.0}, draw opacity={1.0}}, show background rectangle]
\begin{axis}[title={}, title style={at={{(0.5,1)}}, font={{\fontsize{14 pt}{18.2 pt}\selectfont}}, color={rgb,1:red,0.0;green,0.0;blue,0.0}, draw opacity={1.0}, rotate={0.0}}, legend style={color={rgb,1:red,0.0;green,0.0;blue,0.0}, draw opacity={1.0}, line width={1}, solid, fill={rgb,1:red,1.0;green,1.0;blue,1.0}, fill opacity={1.0}, text opacity={1.0}, font={{\fontsize{8 pt}{10.4 pt}\selectfont}}, at={(1.02, 1)}, anchor={north west}}, axis background/.style={fill={rgb,1:red,1.0;green,1.0;blue,1.0}, opacity={1.0}}, anchor={north west}, xshift={1.0mm}, yshift={-1.0mm}, width={0.8\textwidth}, height={0.25\textheight}, scaled x ticks={false}, xlabel={$r$}, x tick style={color={rgb,1:red,0.0;green,0.0;blue,0.0}, opacity={1.0}}, x tick label style={color={rgb,1:red,0.0;green,0.0;blue,0.0}, opacity={1.0}, rotate={0}}, xlabel style={, font={{\fontsize{11 pt}{14.3 pt}\selectfont}}, color={rgb,1:red,0.0;green,0.0;blue,0.0}, draw opacity={1.0}, rotate={0.0}}, xmajorgrids={true}, xmin={4.52}, xmax={21.48}, xtick={{5.0,9.0,13.0,17.0,21.0}}, xticklabels={{$5$,$9$,$13$,$17$,$21$}}, xtick align={inside}, xticklabel style={font={{\fontsize{8 pt}{10.4 pt}\selectfont}}, color={rgb,1:red,0.0;green,0.0;blue,0.0}, draw opacity={1.0}, rotate={0.0}}, x grid style={color={rgb,1:red,0.0;green,0.0;blue,0.0}, draw opacity={0.1}, line width={0.5}, solid}, axis x line*={left}, x axis line style={color={rgb,1:red,0.0;green,0.0;blue,0.0}, draw opacity={1.0}, line width={1}, solid}, scaled y ticks={false}, ylabel={$\hinf$ error}, y tick style={color={rgb,1:red,0.0;green,0.0;blue,0.0}, opacity={1.0}}, y tick label style={color={rgb,1:red,0.0;green,0.0;blue,0.0}, opacity={1.0}, rotate={0}}, ylabel style={, font={{\fontsize{11 pt}{14.3 pt}\selectfont}}, color={rgb,1:red,0.0;green,0.0;blue,0.0}, draw opacity={1.0}, rotate={0.0}}, ymode={log}, log basis y={10}, ymajorgrids={true}, ymin={1e-9}, ymax={0.11435994055361993}, ytick={{1.0e-8,1.0e-6,0.0001,0.01}}, yticklabels={{$10^{-8}$,$10^{-6}$,$10^{-4}$,$10^{-2}$}}, ytick align={inside}, yticklabel style={font={{\fontsize{8 pt}{10.4 pt}\selectfont}}, color={rgb,1:red,0.0;green,0.0;blue,0.0}, draw opacity={1.0}, rotate={0.0}}, y grid style={color={rgb,1:red,0.0;green,0.0;blue,0.0}, draw opacity={0.1}, line width={0.5}, solid}, axis y line*={left}, y axis line style={color={rgb,1:red,0.0;green,0.0;blue,0.0}, draw opacity={1.0}, line width={1}, solid}, colorbar style={title={}}, point meta max={nan}, point meta min={nan}]
    \addplot[color=blue, name path={6bbdc228-316c-411a-b429-b7b55f11bff0}, draw opacity={1.0}, line width={1}, solid, mark={*}, mark size={3.0 pt}, mark options={color=black, draw opacity={1.0}, fill=blue, fill opacity={1.0}, line width={0.75}, rotate={0}, solid}, dashed]
        coordinates {
            (5,0.04282119599987606)
            (7,0.019696244977096777)
            (9,0.005847487980626263)
            (11,0.0014658958365942642)
            (13,0.00033165135178661294)
            (15,7.05002174038418e-5)
            (17,1.4258904952455394e-5)
            (19,2.992117374064235e-6)
            (21,6.098886235369962e-7)
        };
    \addlegendentry {SO-BT}
        ;
    \addplot[color=black, name path={6e2877b2-05c8-46fb-955e-6f02e81a55d9}, draw opacity={1.0}, line width={1}, solid, mark={diamond*}, mark size={3.0 pt}, mark options={color=black, draw opacity={1.0}, fill=black, fill opacity={1.0}, line width={0.75}, rotate={0}, solid}]
        coordinates {
            (5,0.037792735231474424)
            (7,0.07212023378374657)
            (9,0.007444890431391772)
            (11,0.0005119936905391195)
            (13,0.0013691365412874595)
            (15,0.0004128687908912796)
            (17,1.531939199575027e-5)
            (19,1.9340725602625436e-6)
            (21,5.420036531404618e-7)
        };
    \addlegendentry {$\hinf$-min approx.}
        ;
    \addplot[color=ourcolor, name path={33da23d1-c61f-40e9-bfb2-5a7a3ef72ca4}, draw opacity={1.0}, line width={1}, solid, mark={triangle*}, mark size={4.0 pt}, mark options={color=black, draw opacity={1.0}, fill=ourcolor, fill opacity={1.0}, line width={0.75}, rotate={0}, solid}]
        coordinates {
            (05,4.174850e-03)
            (07,3.298406e-04)
            (09,2.184062e-05)
            (11,2.779742e-07)
            (13,2.757395e-08)
            (15,1.230419e-08)
            (17,1.113622e-08)
            (19,1.115440e-08)
            (21,1.234850e-08)
        };
    \addlegendentry {SOBMOR}
        ;
    \addplot[color=ourcolor, name path={33da23d1-c61f-40e9-bfb2-5a7a3ef72ca4}, draw opacity={1.0}, line width={1}, solid, mark={square}, mark size={3.5 pt}, mark options={color=black, draw opacity={1.0}, fill=ourcolor, fill opacity={1.0}, line width={0.75}, rotate={0}, solid}, dashed]
        coordinates {
            (05, 4.174851e-03)
            (07, 3.298408e-04)
            (09, 2.184084e-05)
            (11, 2.771968e-07)
            (13, 3.025598e-08)
            (15, 2.727334e-08)
            (17, 1.118423e-08)
            (19, 6.121708e-09)
            (21, 2.041558e-08)
        };
        \addlegendentry {diagonal $M$}
\end{axis}
\end{tikzpicture}
  \caption{$\hinf$ errors for different MOR methods as ROM order $r$ is varied. The FOM is the second-order mechanical system with $\nx=301$. }%
  \label{fig:2nd_hinf_err}
\end{figure}
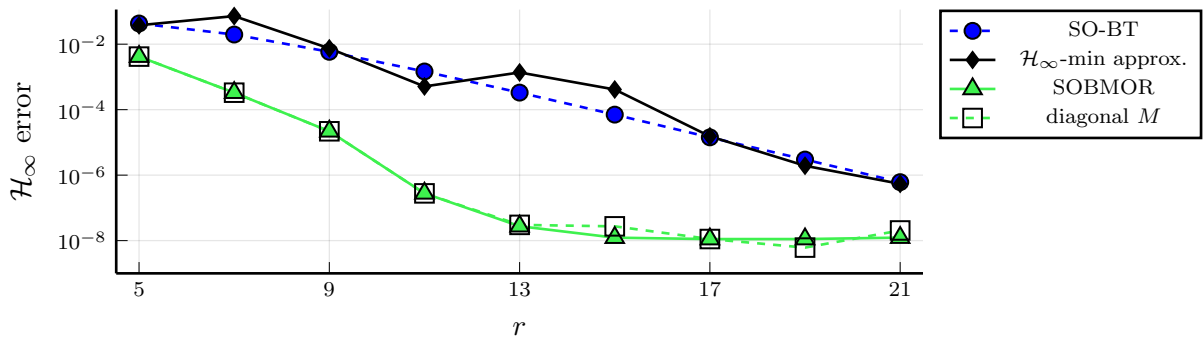

In Figure~\ref{fig:2nd_hinf_err}, we compare the $\hinf$ norm of the error transfer functions for ROMs with varying order. Note that our proposed method outperforms the other methods for structure-preserving model order reduction for all tested values of $r$ by a huge margin. To obtain a ROM with an $\hinf$ error of less than $10^{-6}$, using our method, a reduced model order of only $11$ is needed, while the other methods only reach this accuracy with a reduced model order of $21$. The results do not indicate a systematic difference in accuracy when using either a full or a diagonal parametrization of $M$, respectively. Furthermore, there is no significant difference in runtime, when using either parametrization.

It is interesting to note that beyond a reduced model order of $r=15$, the $\hinf$ error does not decrease significantly when our method is employed, which needs further investigation. We do not include a comparison with the direct $\hinf$ error minimization based on \texttt{ab13hd}, since the execution time of \texttt{ab13hd} on the error transfer function averages to around 20s, which renders the use of \texttt{ab13hd} inside an optimization loop computationally prohibitive.



Looking at the error transfer functions for different reduced orders which are depicted in Figure~\ref{fig:snd_errTF_methods}, it can be observed that the error using our method is nearly constant over a wide frequency range. The other methods lead to an error that is larger for almost all frequencies but also more oscillatory. Using our method, the error becomes less oscillatory for a larger reduced order.

\begin{figure}[tbh]
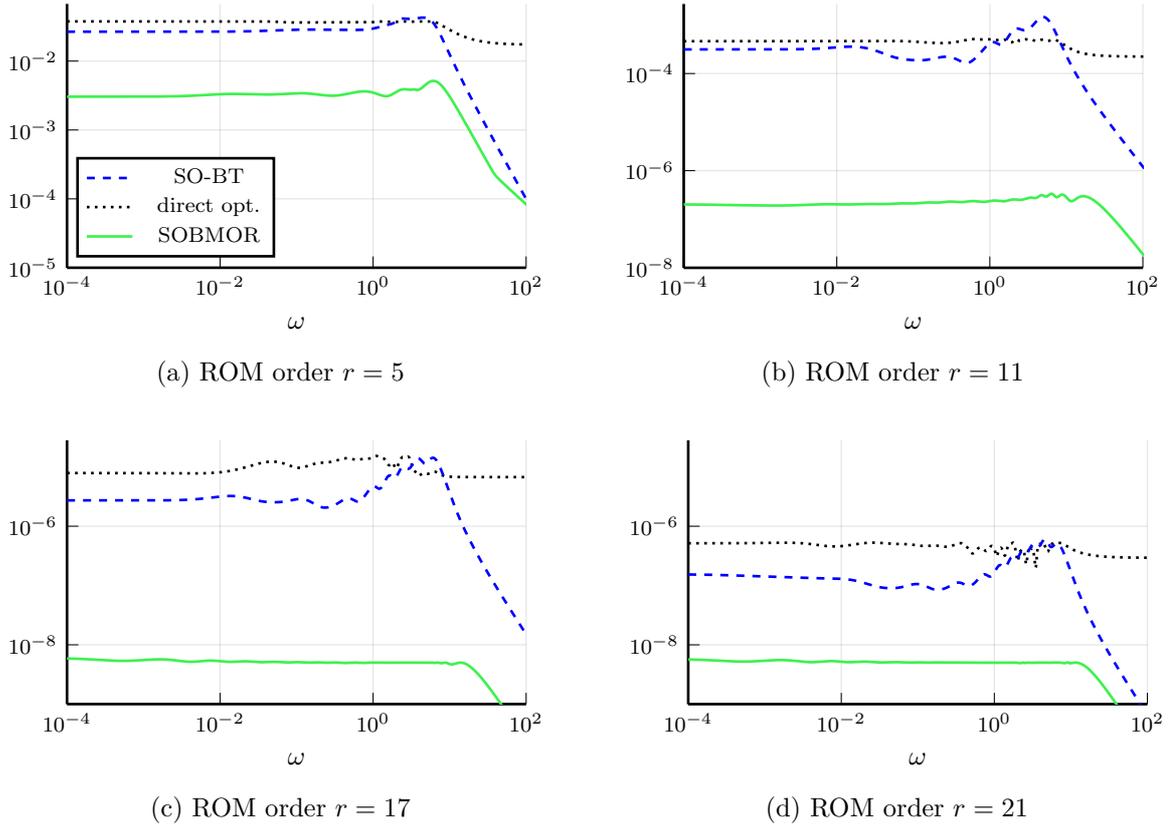

  \centering
  \begin{tabular}{cc}
    \input{./PlotSources/ErrorTransfunPlots/SND/MethodComparison/errors5.tex} &
    \input{./PlotSources/ErrorTransfunPlots/SND/MethodComparison/errors11.tex} \\
    (a) ROM order $r=5$ & (b) ROM order $r=11$ \vspace{0.5cm}\\ 
    \input{./PlotSources/ErrorTransfunPlots/SND/MethodComparison/errors17.tex} &
    \input{./PlotSources/ErrorTransfunPlots/SND/MethodComparison/errors21.tex}\\
    (c) ROM order $r=17$ & (d) ROM order $r=21$ \\
  \end{tabular}
  \caption{Maximum singular values of the error transfer function between the FOM (second-order mechanical system with dimension $\nx=301$) for different structure-preserving MOR methods as $r$ is varied.}%
  \label{fig:snd_errTF_methods}
\end{figure}

\section{Conclusions and Outlook}

In this paper, we have developed SOBMOR, a new approach for MOR based on direct optimization of the structured ROM's state-space realization. Instead of directly minimizing the $\hinf$ or $\htwo$ error, we have proposed to iteratively apply a leveled least squares objective function, which tends to lead to more accurate results at drastically reduced computational costs. We have demonstrated the benefits of our method and its superior performance compared to other structure-preserving MOR methods in several numerical experiments.

The method is applicable in a far broader setting that we plan to explore in subsequent works.
\begin{itemize}
  \item Until now, we have chosen logarithmically spaced sample points on the imaginary axis. However, we may decrease the computational footprint and necessary a priori knowledge about the critical intervals of the FOM by adaptive sampling. Therefore, in \cite{SchV2021Adaptive}, we propose an adaptive sampling algorithm, which removes the necessary a priori knowledge and even leads to smaller computation times.
  \item The method is entirely data-driven in that it does not require direct access to the system matrices of the full order model. This allows a modification of our proposed method to perform system identification for systems with a predefined structure. The effectiveness of this approach for passive systems is explored in \cite{Sch2021Ident}.
  \item We plan to explore the utilization of this method in parametric MOR in which the FOM depends on several parameters and the ROM should mimic the behavior of the FOM for all admissible parameter settings. 
\end{itemize}


\section*{Acknowledgement}
We thank Volker Mehrmann (TU Berlin) for his useful comments on an earlier version of this manuscript.

\bibliography{strucpres.bib}
\end{document}